  \pgfplotsset{compat=newest}
\newtheorem{theorem}{Theorem}[section]
\newtheorem{remark}{Remark}
\newcommand{\map}[3]{#1: #2 \rightarrow #3}
\newcommand{\real}{\mathbb{R}}
\newcommand{\norm}[1]{\left\lVert#1\right\rVert}
\newcommand\oprocendsymbol{\hbox{$\square$}}
\newcommand\oprocend{\relax\ifmmode\else\unskip\hfill\fi\oprocendsymbol}
\newcommand*{\QEDA}{\hfill\ensuremath{\blacksquare}}
\DeclareMathOperator{\diag}{diag}
\DeclareMathOperator*{\tr}{tr}
\newcommand{\xeq}{x_{\rm eq}}
\newcommand{\Sigmaerri}{\Sigma_{{\rm err}}^{i}}
\newcommand{\Aext}{A_{\rm ext}}
\newcommand{\Aerri}{A_{{\rm err}}^i}
\newcommand{\PextB}{P_{\rm ext}^{\rm B}}
\newcommand{\QextB}{Q_{\rm ext}^{\rm B}}
\newcommand{\AextB}{A_{\rm ext}^{\rm B}}
\newcommand{\BextB}{B_{\rm ext}^{\rm B}}
\newcommand{\Bext}{B_{\rm ext}}
\newcommand{\Berri}{B_{{\rm err}}^i}
\newcommand{\CextB}{C_{\rm ext}^{\rm B}}
\newcommand{\Cext}{C_{\rm ext}}
\newcommand{\Cerri}{C_{{\rm err}}^i}
\def\0{{\mathbf{0}}}
\newcommand{\C}{\boldsymbol{C}}
\newcommand{\RN}{\mathbb{R}^N}
\newcommand{\Rp}{\mathbb{R}^p}
\newcommand{\Rm}{\mathbb{R}^m}
\newcommand{\RNN}{\mathbb{R}^{N\times N}}
\newcommand{\RNm}{\mathbb{R}^{N\times m}}
\newcommand{\RpN}{\mathbb{R}^{p\times N}}
\newcommand{\Rpm}{\mathbb{R}^{p\times m}}
\newcommand{\Saux}{\Sigma_{\rm aux}}
\newcommand{\Sext}{\Sigma_{{\rm ext}}}
\newcommand{\Serr}{\Sigma_{{\rm err}}}
\newcommand{\SextB}{\Sigma_{\rm ext}^{\rm B}}
\newcommand{\SiD}{\Sigma_i^{\rm D}}
\newcommand{\giD}{g_i^{\rm D}}
\newcommand{\SoneD}{\Sigma_1^{\rm D}}
\newcommand{\SoneL}{\Sigma_1^{\rm L}}
\newcommand{\StwoD}{\Sigma_2^{\rm D}}
\newcommand{\StwoL}{\Sigma_2^{\rm L}}
\newcommand{\SlD}{\Sigma_l^{\rm D}}
\newcommand{\SlL}{\Sigma_l^{\rm L}}
\newcommand{\SjL}{\Sigma_j^{\rm L}}
\newcommand{\SjD}{\Sigma_j^{\rm D}}
\newcommand{\SD}{\Sigma^{\rm D}}
\newcommand{\SiL}{\Sigma_i^{\rm L}}
\newcommand{\giL}{g_i^{\rm L}}
\newcommand{\SiLH}{\widehat{\Sigma}_i^{\rm L}}
\newcommand{\giLH}{\widehat{g}_i^{\rm L}}
\newcommand{\SL}{\Sigma^{\rm L}}
\newcommand{\Hnorm}{\mathcal{H}_2}
\newcommand{\HnormInf}{\mathcal{H}_{\infty}}
\newcommand{\Jcal}{\mathcal{J}}
\newcommand{\Jcali}{\mathcal{J}_i}
\begin{document}
\title{\LARGE \bf Controllable Neural Architectures for Multi-Task Control}

\author{Umberto Casti, Giacomo Baggio, Sandro Zampieri, and Fabio
  Pasqualetti \thanks{This material is based upon work supported in
    part by awards AFOSR-FA9550-20-1-0140, and
    AFOSR-FA9550-19-1-0235. Fabio Pasqualetti are with the Department
    of Mechanical Engineering, University of California at Riverside,
    \href{mailto:fabiopas@ucr.edu}{\texttt{fabiopas@ucr.edu.}}
    Umberto Casti, Giacomo Baggio, and Sandro Zampieri are with the
    Department of Information Engineering, University of Padova, Italy
    \{\href{mailto:castiumber@dei.unipd.it}{\texttt{castiumber}},\href{mailto:baggio@dei.unipd.it}{\texttt{baggio}},
    \href{mailto:zampi@dei.unipd.it}{\texttt{zampi\}@dei.unipd.it}}}}
\maketitle


\begin{abstract} 
  This paper studies a multi-task control problem where multiple
  linear systems are to be regulated by a single non-linear
  controller. In particular, motivated by recent advances in
  multi-task learning and the design of brain-inspired architectures,
  we consider a neural controller with (smooth) ReLU activation
  function. The parameters of the controller are a connectivity matrix
  and a bias vector: although both parameters can be designed, the
  connectivity matrix is constant while the bias vector can be varied
  and is used to adapt the controller across different control
  tasks. The bias vector determines the equilibrium of the neural
  controller and, consequently, of its linearized dynamics. Our
  multi-task control strategy consists of designing the connectivity
  matrix and a set of bias vectors in a way that the linearized
  dynamics of the neural controller for the different bias vectors
  provide a good approximation of a set of desired controllers. We
  show that, by properly choosing the bias vector, the linearized
  dynamics of the neural controller can replicate the dynamics of any
  single, linear controller. Further, we design gradient-based
  algorithms to train the parameters of the neural controller, and we
  provide upper and lower bounds for the performance of our neural
  controller. Finally, we validate our results using different
  numerical examples.
\end{abstract}

\section{Introduction}\label{sec: introduction}
Control algorithms are typically tuned to optimize the performance of
a single dynamical system. Similarly, machine learning algorithms are
often trained for specific datasets and require time-consuming
retraining procedures to accomodate changes in the data and
objectives\cite{BJ:2000,RC:97}. On the other hand, many natural
systems can seamlessly adapt across different tasks and transfer
learned skills to new and unseen contexts. In the human brain, for
instance, astrocytes are believed to bias neuronal functioning to
provide contextual adaptation capabilities \cite{LG-FP-TP-SC:23}
without changing neuronal coupling. Motivated by the discrepancy
between natural and artificial systems and the need to alleviate
retraining times and requirements, techniques for multi-task learning
have recently been developed \cite{AM-SL:18,YY-XS-CH-JZ:19}, showing
that a single artificial architecture can in fact learn to solve
multiple tasks. Yet, techniques for multi-task control have remained
elusive.

In this paper, we propose a non-linear neural controller to solve a
multi-task control problem. We consider a controller inspired by
neural architectures \cite{PJW:89} with (smooth) ReLU activation
function (see Fig. \ref{fig:figExample}). The parameters of the
controller are the states connectivity matrix, whose value is trained
at design time and remains constant, and a bias vector, whose value
depends on the control problem at hand and is selected among a set of
values trained at design time. Selecting the bias vector is a
convenient way to provide the controller with the ability to adapt to
different dynamical systems and tasks without the need to retrain the
states connectivity matrix. We emphasize that the main
  objective of this work is to validate the ability of our nonlinear
  controller to approximate the behavior of a set of desired linear
  controllers, rather than to solve any specific control
  problem. Loosely speaking, our approach takes inspiration from the
  human brain that, despite a relatively static neuronal network,
  modulates neuronal responses to accomodate contextual and task
  changes.
%
%
%

\begin{figure}
 \centering   
 \input{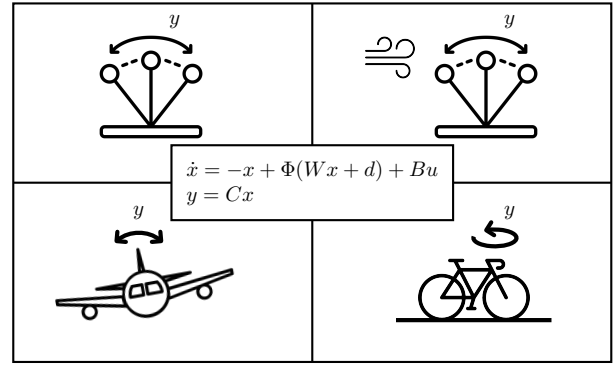} 
 \caption{An illustration of the multi-task control problem considered in
   this paper, where a set of linear systems is to be regulated by a
   single non-linear controller. This paper proposes a neural
   controller with (smooth) ReLU activation function and two
   parameters. The connectivity matrix $W$ is typically large and
   remains constant across the different tasks. The bias vector $d$,
   instead, is low-dimensional and is used to adapt the performance
   of the neural controller to different control tasks. See Section
   \ref{sec:prelim} for a detailed explanation of the neural
   controller and our multi-task control problem, and Section
   \ref{sec:grad} for a numerical study of this multi-task control
   example.}
    \label{fig:figExample}
\end{figure}

\smallskip
\noindent
\textbf{Related Work.} The literature on multi-task control is limited. Traditional controller design methods, such as the linear quadratic regulator and model predictive control~\cite{BDA-JBM:07,BK-MC:16}, are tailored for single dynamical systems and often require a complete redesign when system dynamics change.

Adaptive control is crucial for managing systems with significant
uncertainties, where robust techniques
fail~\cite{JPH-DL-ASM:03}. These frameworks typically employ a family
of controllers with parameters that vary
smoothly~\cite{KJA:91}. However, when system parameters affect
dynamics in complex ways, constructing a continuously parameterized
set of controllers becomes difficult, especially if high robustness
and performance are required. To mitigate these challenges, approaches
like logic-based switching strategies~\cite{BM:85,MF-BB:86,DEM-EJD:91}
have been proposed, focusing on discrete controller switching rather
than continuous adjustments. Our work, in contrast, studies the
approximation properties of a neural controller, which is independent
but could be integrated with such switching-based methods.

Recent studies in multi-task control, such as~\cite{LX-LY-GC-SS:22,YC-AMO-FP-EDA:23}, explore system identification across multiple datasets, while others~\cite{TTZ-KK-BDL:22,TG-AAAM-VK-FP:23} address transfer and imitation control. Differently
from these approaches, this paper considers a setting where
the system to be controlled varies abruptly and arbitrarily.

Multi-task learning methods enable neural networks to handle diverse problems, often employing techniques like masks to select task-specific network components~\cite{AM-SL:18}. However, these methods are prone to catastrophic forgetting, where models lose previously learned knowledge when exposed to new tasks~\cite{BR-DT-AZ:22,GIP-RK-JLP-CK-SW:19}. Despite the extensive literature on multi-task learning~\cite{AM-SL:18,SM-AO-LA-PJ:2014,YY-XS-CH-JZ:19,AAD-UD-CS:2017}, such techniques do not directly address control challenges or provide performance guarantees.

The work most similar to ours is~\cite{DK-MB-MFA:96}, which studies simultaneous approximation of multiple systems using a single approximating model. In contrast, our method allows for multiple approximating systems that share a common connectivity matrix but differ through low-dimensional bias vectors. This relationship between systems introduces complexity, making existing LMI techniques~\cite{DK-MB-MFA:96} inapplicable in a straightforward manner.

Finally, our architecture draws inspiration from neuroscience, particularly the interplay between astrocytes and neurons in the human brain~\cite{SGG-NS-LP-JFO:18,CHTT-GP-GRG:18,CMR-SC-TP:23}. Emerging theories suggest that astrocytes modulate neuronal function, enabling adaptive responses without altering the network structure~\cite{LG-FP-TP-SC:23}. Our neural controller mimics this biological mechanism, where the bias vector acts analogously to astrocytic modulation, facilitating context-dependent adaptation.

\smallskip
\noindent
\textbf{Paper contribution.} The main contributions of this paper are
as follows. First, we formulate a novel multi-task control problem,
where a set of known linear systems is to be regulated by a (possibly
varying) single controller. We propose a novel control strategy based
on a non-linear neural controller with (smooth) ReLU activation function and two
parameters: a connectivity matrix and a bias vector. While the
controller connectivity matrix is trained at design stage and remains
constant, the values of the low-dimensional bias vector are trained at
design stage but can vary over time. Changing the bias vector modifies
the equilibria of the neural controller and its linearized dynamics,
and allows the controller to approximate different desired linear
dynamics by tuning a small subset of the parameters. Second, we prove
that, by properly choosing the bias vector, the linearized dynamics of
the neural controller can replicate the dynamics of any linear
system. Third, we provide a gradient-based algorithm to train the
parameters of the neural controllers in a way that its linearized
dynamics obtained by appropriately changing the bias vector
approximate a set of desired linear dynamics. Fourth and finally, we
provide upper and lower bounds on the performance of our multi-task control
problem. While some bounds are of technical nature, others show of the
approximation capabilities of the neural controller depend on the
dimension, number and similarity of the desired linear dynamics, and
the dimension of the neural controller.

\smallskip
\noindent
\textbf{Paper organization.} The rest of the paper is organized as
follows. Section \ref{sec:prelim} contains our problem formulation and
preliminary results. Section \ref{sec:grad} contains our numerical
algorithms and some numerical results. Finally, Section
\ref{sec:Bounds} contains our lower and upper bounds on the multi-task
control problem, and Section \ref{sec: conclusion} concludes the
paper.

\section{Problem setup and preliminary notions}\label{sec:prelim}
Consider the following non-linear neural controller:
\begin{equation}\label{eq:appSys}
\Sigma = \left\lbrace\begin{array}{@{}l@{\,}l}
\dot{x} &= -x + \Phi \left( W x + d \right) + B u, \\
y &= C x ,
\end{array}\right. 
\end{equation}
where $x \in \RN$, $d \in \RN$, $u \in \Rm$, and
$y \in \Rp$ are the state, a free parameter, input, and
output of the controller, respectively, and $W \in \RNN$, $B \in \RNm$
and $C \in \RpN$ are the controller matrices. The activation
function $\map{\Phi}{\RN}{\left(0,\,+\infty\right)^N}$ is the elementwise application of the
(smooth) ReLU function $\map{\phi}{\real}{\left(0,\,+\infty\right)}$, which is
defined as $\phi\left(x\right) = \ln\left(1+e^{x}\right)$. Further,
when $u = 0$, the equilibria of the neural controller
\eqref{eq:appSys} satisfy the equation
\begin{equation}\label{eq:eqPoints}
  \xeq = \Phi \left( W \xeq + d \right),
\end{equation}
and, locally, obey the linearized dynamics
\begin{equation}\label{eq:linSys}
  \SL=\left(-I +  \underbrace{\diag\left(\Phi_\text{d} (W\xeq + d)\right)}_{D} W,\,B,\,C\right),
\end{equation}
where $\map{\Phi_\text{d}}{\real^N}{\left(0,\,1\right)^{N}}$ returns the elementwise application of the function
\begin{equation*}
   \phi_{\text{d}} \left( x\right) = \frac{d}{dx}\phi\left(x\right) =
  \frac{1}{1+e^{-x}} .
\end{equation*}

\begin{theorem}{\textbf{\emph{(Parametrization using
        $d$)}}}\label{thm:dD}
  For any matrix $W \in \real^{N\times N}$ and vector
  $\bar{d} \in \left(0,\,1\right)^N$, there exists $\xeq$ and $d$ that
  satisfy equation \eqref{eq:eqPoints} and
  $\Phi_\text{d} (W\xeq + d) = \bar d$.
\end{theorem}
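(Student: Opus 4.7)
The plan is to give an explicit constructive proof by inverting the sigmoid componentwise.

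First I would observe that the scalar derivative $\phi_\text{d}(s) = 1/(1+e^{-s})$ is a smooth strictly increasing bijection from $\real$ onto $(0,1)$, with inverse $\phi_\text{d}^{-1}(t) = \ln(t/(1-t))$. Consequently, the componentwise map $\Phi_\text{d} : \real^N \to (0,1)^N$ is a bijection. This is the only non-routine ingredient, and it is essentially immediate from the definition of $\phi_\text{d}$.

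Next, given $\bar d \in (0,1)^N$, I would define $s \in \real^N$ componentwise by $s_i := \ln\!\big(\bar d_i/(1-\bar d_i)\big)$, so that $\Phi_\text{d}(s) = \bar d$ by construction. I would then set
\begin{equation*}
x_{\text{eq}} := \Phi(s), \qquad d := s - W\,\Phi(s).
\end{equation*}
With these choices, a direct substitution yields $W x_{\text{eq}} + d = W\Phi(s) + s - W\Phi(s) = s$, and therefore $\Phi(W x_{\text{eq}} + d) = \Phi(s) = x_{\text{eq}}$, which is exactly the equilibrium equation \eqref{eq:eqPoints}. Moreover, $\Phi_\text{d}(W x_{\text{eq}} + d) = \Phi_\text{d}(s) = \bar d$, which is the second required identity.

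There is really no obstacle here: the statement reduces to the fact that the sigmoid is a bijection onto $(0,1)$, and once $s$ is chosen to match $\bar d$ under $\Phi_\text{d}$, the equilibrium condition is satisfied automatically by taking $x_{\text{eq}} = \Phi(s)$ and absorbing the linear term $W\Phi(s)$ into $d$. The only care needed is to note that no constraint is placed on $d$ itself (it is a free parameter in $\real^N$), so subtracting $W\Phi(s)$ is admissible; and that $\bar d \in (0,1)^N$ is exactly the range of $\Phi_\text{d}$, so the elementwise inverse is well defined. This gives the desired $(x_{\text{eq}}, d)$ pair and completes the proof.
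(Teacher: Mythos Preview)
Your proof is correct and essentially identical to the paper's: both set $x_{\text{eq}} = \Phi(\Phi_\text{d}^{-1}(\bar d))$ and $d = \Phi_\text{d}^{-1}(\bar d) - W\Phi(\Phi_\text{d}^{-1}(\bar d))$, then verify the two identities by direct substitution. Your introduction of the intermediate variable $s = \Phi_\text{d}^{-1}(\bar d)$ and the explicit formula for the sigmoid inverse make the exposition slightly cleaner, but the argument is the same.
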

\begin{proof}
  Notice that $\phi_{\text{d}}$ is a injective function, so
  that its inverse is well defined. Let
  $\xeq = \Phi\left[\Phi_\text{d}^{-1}\left(\bar{d}\,\right)\right]$ and
  $d =
  \Phi_\text{d}^{-1}\left(\bar{d}\,\right)-W\Phi\left[\Phi_\text{d}^{-1}\left(\bar{d}\,\right)\right]$. Then,
  from \eqref{eq:eqPoints},
  \begin{align*}
    \xeq &= \Phi \left( W
    \Phi\left[\Phi_\text{d}^{-1}\left(\bar{d}\,\right)\right] +
    \Phi_\text{d}^{-1}\left(\bar{d}\right)-W\Phi\left[\Phi_\text{d}^{-1}\left(\bar{d}\,\right)\right]
    \right)\\
    &= \Phi\left( \Phi_\text{d}^{-1}\left(\bar{d}\,\right) \right) =
      \xeq .
  \end{align*}
  Further, to conclude,
  \begin{align*}
    \Phi_\text{d} \left( W
    \Phi\left[\Phi_\text{d}^{-1}\left(\bar{d}\,\right)\right] +
    \Phi_\text{d}^{-1}\left(\bar{d}\,\right)-W\Phi\left[\Phi_\text{d}^{-1}\left(\bar{d}\,\right)\right]
    \right) = \bar d.
  \end{align*}
\end{proof}
Theorem \ref{thm:dD} shows that there exists a vector $d$ and an
equilibrium $\xeq$ that realizes any desired matrix $D$ in
\eqref{eq:linSys}. Hence, in what follows, we derive conditions and
algorithms for the matrix $D$, with the understanding that such matrix
can ultimately be realized by choosing the vector $d$.
Now we are ready to formally state our multi-task control  problem, consider a set of $M$ distinct, stable, controllable, and
observable linear, time-invariant systems denoted as
\begin{equation}\label{eq:desSystems}
  \SiD = \left(A_i,\,B_i,\,C_i\right) ,
\end{equation}
with $A_i \in \real^{n \times n}$, $B_i \in \real^{n \times m}$,
$C_i \in \real^{p \times n}$ and $i = 1,\dots, M$. Our multi-task control problem is
\begin{equation}\label{eq:problem2}
\min_{\substack{W,D_1, \dots,D_M,\\ \,B ,\, C}}
\;\; \sum_{i=1}^{M}\norm{\SiD-\SiL }^2_{2} ,
\end{equation}
where $\SiL$ denotes the $i$-th linearized dynamics \eqref{eq:linSys}
with diagonal matrix $D_i$ and $\norm{\cdot}_{2}$ the $\Hnorm$-norm.
\begin{remark}
  Problem~\eqref{eq:problem2} is treated as an optimal approximation
  problem rather than a traditional control problem, since the focus
  of this paper is on the approximation capabilities of the
  biologically inspired non-linear neural
  controller~\eqref{eq:appSys}, rather than the control performance of
  the systems $\SiD$, which may represent LQR controllers or
  general~systems. We also note that the implementation of a neural
  controller may be more efficient than the separate implementation of
  multiple controllers. In fact, storing $M$ distinct linear
  controllers as in \eqref{eq:desSystems} requires $Mn^4mp$ parameters
  that define the matrices $A_i$, $B_i$, and $C_i$. In contrast, the
  neural controller~\eqref{eq:appSys} only needs $MN +N^4mp$
  parameters, making it more efficient as $M$ increases and
  $N < n^4mp$.
  \oprocend

\end{remark}
\begin{remark}
  Although Problem~\eqref{eq:problem2} is formulated under the
  assumption that all systems in the set~\eqref{eq:desSystems} share
  the same state dimension $n$, this simplification is made primarily
  for the ease of notation. The theoretical derivations, including the
  gradient computations discussed in Section~\ref{sec:grad}, are
  easily adaptable to scenarios where the systems $\SiD$ have distinct
  state dimensions $n_i$.
  \oprocend
\end{remark}

In the minimization problem
\eqref{eq:problem2}, the optimization variables allow the neural
controller \eqref{eq:appSys} to approximate the desired systems
$\eqref{eq:desSystems}$ locally around its equilibrium points. The
approximation error in \eqref{eq:problem2} depends in a nontrivial way
on several parameters, including the dimension of the neural
controller, the number, and the
diversity of the systems to be approximated. In the following
sections, we define both upper and lower bounds on the approximation
error, as detailed in Section~\ref{sec:Bounds}. Furthermore, in
Section~\ref{sec:grad}, we derive the gradient useful to implement a
numerical procedure based on a gradient descent to solve the
minimization problem~\eqref{eq:problem2}.

\section{Gradient-based multi-task control}\label{sec:grad}
This section contains the analytical expression of the gradient of the
multi-task control problem \eqref{eq:problem2} with respect to
the matrices of the neural controller. These expressions can be used
to numerically optimize the performance of the neural controller given
a set of desired control tasks. To this aim, define the following
error system and matrices:
\begin{align*}
\Sigmaerri = (\Aerri, \Berri, \Cerri),
\end{align*} with
\begin{equation*}
\begin{split}
    \Aerri &=
    \begin{bmatrix}
      A_i & 0\\
      0 & -I+D_i W\\
    \end{bmatrix},\; 
    \Berri =
    \begin{bmatrix}
      B_i\\
      B
    \end{bmatrix}, \\
    \Cerri &=
    \begin{bmatrix}
      C_i & -C
    \end{bmatrix} ,
  \end{split}
\end{equation*}
and observability ($Q^i$) and controllability ($P^i$) Gramians as
\begin{equation*}
Q^i = \left[\begin{array}{cc}
Q^i_{11}&Q^i_{12}\\
{Q^i_{12}}^{\top}&Q^i_{22}
            \end{array}\right]
          ,\;
          P^i =\left[\begin{array}{cc}
P^i_{11}&P^i_{12}\\
{P^i_{12}}^{\top}&P^i_{22}
\end{array}\right].
\end{equation*}

\begin{theorem}{\textbf{\emph{(Analytical gradient of
        \eqref{eq:problem2})}}}\label{thm: gradient}
  The gradient of the minimization problem \eqref{eq:problem2} is as
  follows:
  \begin{align}\label{eq:gradTab}
    \begin{split}
      \frac{\partial \Jcal}{\partial W} &= 2\sum_{i=1}^{M}
      D_i\left({Q^i_{12}}^{\top}{P^i_{12}}+{Q^i_{22}}{P^i_{22}}\right),
      \\
      \frac{\partial \Jcal}{\partial B}
      &=2\sum_{i=1}^{M}\left({Q^i_{12}}^{\top}
        B_i+{Q^i_{22}} B\right), \\
      \frac{\partial \Jcal}{\partial C} &=2\sum_{i=1}^{M}\left( -
        C_i{P^i_{12}+C{P^i_{22}}}\right),
      \\
      \frac{\partial\Jcal}{\partial\left(D_i\right)_{hh}} &=
      2\left(\left({Q^i_{12}}^{\top}{P^i_{12}}
          + {Q^i_{22}}
          {P^i_{22}}
        \right)W^{\top}\right)_{hh}
      .
    \end{split}
  \end{align}
\end{theorem}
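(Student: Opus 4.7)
The plan is to first rewrite each cost term as the squared $\mathcal{H}_2$-norm of the error system $\Sigmaerri=(\Aerri,\Berri,\Cerri)$, then apply the classical Gramian-based sensitivity formulas, and finally propagate those sensitivities through the block partitions of $\Aerri$, $\Berri$, and $\Cerri$ to recover the four stated expressions.

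First, since $\SiD$ and $\SiL$ are both LTI and share no state variables, their difference is realized by $\Sigmaerri$, and so
\begin{equation*}
\|\SiD-\SiL\|_2^2 \;=\; \|\Sigmaerri\|_2^2 \;=\; \tr(\Cerri P^i \Cerri^\top) \;=\; \tr(\Berri^\top Q^i \Berri),
\end{equation*}
where $P^i, Q^i$ are the unique symmetric solutions of the Lyapunov equations $\Aerri P^i + P^i \Aerri^\top + \Berri \Berri^\top = 0$ and $\Aerri^\top Q^i + Q^i \Aerri + \Cerri^\top \Cerri = 0$. This requires stability of $\Aerri$, which holds whenever the current linearization $\SiL$ is itself stable, that is, at every point in the domain of $\Jcal$ where the gradient is defined.

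Second, I would derive the standard $\mathcal{H}_2$-sensitivity identities by implicit differentiation of the Lyapunov equations, using cyclic invariance of the trace and symmetry of $P^i,Q^i$. The result, in the convention that the $(j,k)$ entry of a ``gradient matrix'' is the partial derivative with respect to the $(j,k)$ entry, is
\begin{equation*}
\frac{\partial \|\Sigmaerri\|_2^2}{\partial \Aerri} = 2 Q^i P^i, \qquad \frac{\partial \|\Sigmaerri\|_2^2}{\partial \Berri} = 2 Q^i \Berri, \qquad \frac{\partial \|\Sigmaerri\|_2^2}{\partial \Cerri} = 2 \Cerri P^i .
\end{equation*}

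The remaining step is a chain-rule computation through the block structure. The parameter $W$ enters only the $(2,2)$ block of $\Aerri$, namely $-I+D_iW$; a direct block multiplication shows that the $(2,2)$ block of $Q^iP^i$ equals $Q^{i\top}_{12}P^i_{12}+Q^i_{22}P^i_{22}$, and since $D_i$ is diagonal, chaining through $D_iW$ simply left-multiplies by $D_i$, after which we sum over $i$. The parameter $B$ enters only the bottom block of $\Berri$, and $C$ only the right block of $\Cerri$ with a minus sign; reading off the corresponding blocks of $Q^i\Berri$ and $\Cerri P^i$, adjusting the sign for $C$, and summing over $i$, yields the $B$- and $C$-gradients. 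For $(D_i)_{hh}$, the dependence is only in the $i$-th summand (hence no outer sum); the diagonal constraint on $D_i$ means only the $(h,h)$ entry of the chain-rule expression survives, which leaves a trailing $W^\top$ factor and reproduces the last identity.

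The main obstacle is purely bookkeeping: checking the transposition conventions when moving between $\tr(M\,dX)$ expressions and matrix entries $\partial/\partial X_{jk}$, and verifying in particular that block multiplication yields $(Q^iP^i)_{22}=Q^{i\top}_{12}P^i_{12}+Q^i_{22}P^i_{22}$ rather than its transpose, and that the diagonal constraint on $D_i$ produces $W^\top$ on the right in the last formula rather than $W$ on the left.
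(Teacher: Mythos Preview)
Your proposal is correct and follows essentially the same route as the paper: both start from the error-system representation $\Jcali=\|\Sigmaerri\|_2^2$, invoke the Gramian sensitivity identity $\partial\|\Sigmaerri\|_2^2/\partial\Aerri=2Q^iP^i$ (the paper cites it, you sketch its derivation), and then push the chain rule through the block structure of $\Aerri,\Berri,\Cerri$ to isolate the contributions of $W$, $B$, $C$, and the diagonal entries of $D_i$. The bookkeeping concerns you flag (the $(2,2)$ block of $Q^iP^i$ and the placement of $W^\top$) are exactly the points the paper works out entry by entry, and your block-level reading resolves them the same way.
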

\medskip
\begin{proof}
  Let
  \begin{equation*}
    \Jcali = \norm{\SiD-\SiL }^2_{2} \quad i = 1,\ldots,\,M ,
  \end{equation*}
  and notice from \cite{JV-BV-WM-SV-MD:09} that 
  \begin{equation*}
    \frac{\partial \Jcali}{\partial\Aerri} = \frac{\partial
      \norm{\Sigmaerri}_2^2}{\partial\Aerri} = 2Q^i P^i .
  \end{equation*}
  Using the chain rule \cite{KBP-MSP:12} we obtain
  \begin{equation}\label{eq:gramW}
    \begin{aligned}
      \frac{\partial \Jcali}{\partial \left(W\right)_{hl}} &= \sum_{k=1}^{N+n}\sum_{j=1}^{N+n}\frac{\partial \Jcali}{\partial\left(\Aerri\right)_{kj}}\frac{\partial \left(\Aerri\right)_{kj}}{\partial \left(W\right)_{hl}} \\&= \frac{\partial \Jcali}{\partial\left(\Aerri\right)_{\left(n+h\right)\left(n+l\right)}}\frac{\partial \left(\Aerri\right)_{\left(n+h\right)\left(n+l\right)}}{\partial \left(W\right)_{hl}}\\
      &= 2\left(D_i\right)_{hh}\left( Q_i
        P_i\right)_{\left(n+h\right)\left(n+l\right)}, 
    \end{aligned}
  \end{equation}
  where we have used the fact that, for any $1 \leq h\leq N$ and $1 \leq
  l\leq N$, it holds
  \begin{equation*}
    \frac{\partial\left(\Aerri\right)_{kj}}{\partial w_{hl}} =
    \left\lbrace
      \begin{array}{@{}c@{\quad}l}
        \left(D_i\right)_{hh} & \text{if } \left(k,\,j\right)=\left(n+h,\,n+l\right)\\
        0
                              & \text{otherwise} 
      \end{array}
    \right..
  \end{equation*}
  Rewriting~\eqref{eq:gramW} in compact matrix form and summing over
  the index $i$ we obtain the first equation in
  \eqref{eq:gradTab}. With a similar reasoning we obtain
  \begin{align*}
    \frac{\partial\left(\Aerri\right)_{kj}}{\partial\left(D_i\right)_{hh}}
    = \left\lbrace
    \begin{array}{@{}c@{\quad}l}
      w_{h\left(j-n\right)} & \text{if } k=n+h \land n< j\leq n+N\\
      0 & \text{otherwise}
    \end{array}\right.
  \end{align*}
  and, for any $i=1,\ldots,\,M$,
\begin{equation*}
\frac{\partial\Jcal}{\partial\left(D_i\right)_{hh}}=\frac{\partial\Jcali}{\partial\left(D_i\right)_{hh}} = 2\left(Q^i P^i \begin{bmatrix}0&0\\0& W^{\top}\end{bmatrix}\right)_{\left(n+h\right)\left(n+h\right)}.
\end{equation*}
This leads to the last equation of~\eqref{eq:gradTab}. 

The derivation of $\frac{\partial \Jcal}{\partial C}$ and
$\frac{\partial \Jcal}{\partial B}$ is different. Recall that
\begin{equation*}
\begin{aligned}
  \norm{\Sigmaerri}^2_{2} &= \tr \left(\Cerri P
    ^i{\Cerri}^{\top}\right) = \tr
  \left({\Berri}^{\top}Q^i\Berri\right).
\end{aligned}
\end{equation*}
Notice that $\frac{\partial \Jcali}{\partial\Berri} = 2Q^i\Berri$,
and that 
\begin{align*}
\frac{\partial \Jcal i}{\partial \left(B\right)_{hl}} &=
                                           \sum_{k=1}^{N+n}\sum_{j=1}^{m}\frac{\partial
                                           \Jcali}{\partial\left(\Berri\right)_{kj}}\frac{\partial
                                           \left(\Berri\right)_{kj}}{\partial
                                           \left(B\right)_{hl}} \\&= \frac{\partial
  \Jcali}{\partial\left(\Berri\right)_{\left(n+h\right)l}} =
  2\left(Q^i\Berri\right)_{\left(n+h\right)l}.
\end{align*}
Equivalently in compact form, we have
\begin{equation}\label{eq:gradB}
  \frac{\partial \Jcali}{\partial B} = 2\left({Q^i_{12}}^{\top}
    B_i+{Q^i_{22}} B\right) ,
\end{equation}
and, with a similar procedure,
\begin{equation}\label{eq:gradC}
\frac{\partial \Jcali}{\partial C} = 2\left( - C_i{P^i_{12}} + C{P^i_{22}}\right).
\end{equation}
To conclude, summing over $i = 1,\ldots,M$ on~\eqref{eq:gradB}
and~\eqref{eq:gradC} we get the remaining two equations
of~\eqref{eq:gradTab}.
\end{proof}

The gradient in Theorem \eqref{thm: gradient} allows us to use
gradient descent methods to approximate a solution to the minimization
problem \eqref{eq:problem2}.
We conclude this section with an example.

Consider the systems in Fig. \ref{fig:figExample} with simplified dynamics
\begin{align}\label{eq: systems}
  \begin{split}
\Sigma_\text{Aircraft} &= \left(\begin{bmatrix}
0 & 0\\
0&0
\end{bmatrix}, \begin{bmatrix}
 0\\
\frac{r}{J}
\end{bmatrix} ,\begin{bmatrix}
 0&1
\end{bmatrix}\right), \\
\Sigma_\text{Pendulum} &= \left(\begin{bmatrix}
0 & 1\\
\frac{m g l}{J_t}&0
\end{bmatrix}, \begin{bmatrix}
 0\\
\frac{1}{J_t}
\end{bmatrix},\begin{bmatrix}
 1&0
\end{bmatrix}\right) , \\
\Sigma_\text{Pendulum + frict.} &= \left(\begin{bmatrix}
0 & 1\\
\frac{m g l}{J_t}&\frac{\gamma}{J_t}
\end{bmatrix} , \begin{bmatrix}
 0\\
\frac{1}{J_t}
\end{bmatrix} ,\begin{bmatrix}
 1&0
\end{bmatrix}\right),\\
\Sigma_\text{Bicycle} &= \left(\begin{bmatrix}
0 & 1\\
\frac{m g h}{J}&0
\end{bmatrix} , \begin{bmatrix}
 \frac{Dv_0}{bJ}\\
\frac{m v_0^2 h}{bJ}
\end{bmatrix} ,\begin{bmatrix}
 1&0
\end{bmatrix}\right).
\end{split}
\end{align}

Let $g = 9.8$ and other specific parameters as in \cite{KJA-RMM:08}:
\begin{gather*}
\begin{array}{cccccc}
\hline
J_p & m & l & \gamma & J&r\\
\hline
0.006 & 0.2 & 0.3 & 0.01&0.0475&25\\
\hline
\end{array}\\
\begin{array}{cccccc}
\hline
D & v_0 & M & h &b & J_b \\
\hline
4.8 & 2.5 & 8 & 1 & 1.2 & 8\\
\hline
\end{array}.
\end{gather*}
The systems $\SiD$, with $i=1,\ldots,4$, to be approximated with the
neural controller, of dimension $N=3$, are the Linear Quadratic Regulators of the systems
\eqref{eq: systems}, with identity cost
matrices. Fig. \ref{fig:impResponses} shows the impulse responses of
the desired systems $\SiD$ and the neural controller, which is
optimized using the gradient in Theorem \ref{thm: gradient}. While this
numerical gradient-based procedure offers no stability or performance
guarantees (see Section \ref{sec:Bounds} for some fundamental
performance limitations of our approach), our numerical studies show
promising results and demonstrate the viability of our multi-task control
approach.
\begin{figure}
  \centering
  \input{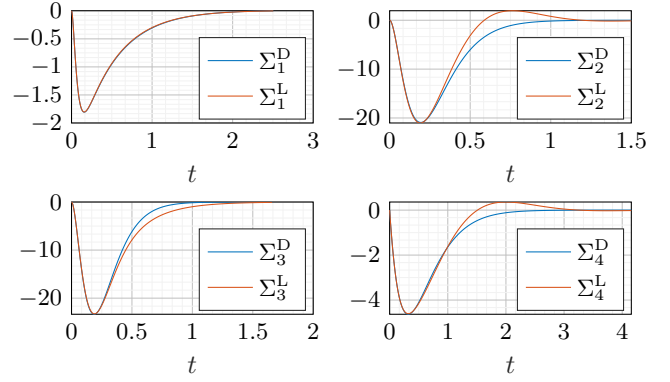}
  \caption{This figure displays the impulse responses of the
    feedback interconnection between the desired controllers $\SiD$ (or
    their implementations via the neural controller~\eqref{eq:appSys}) and the plants described in
    Fig.~\ref{fig:figExample}. The closed-loop impulse responses
    associated with the approximating controllers, which are the result of
    linearizing our neural controller, are highlighted in red. In
    contrast, the impulse responses of the closed-loop systems governed
    by the desired controllers are illustrated in
    blue.}\label{fig:impResponses}
\end{figure}%
An additional observation supporting this fact comes from the simulation results depicted in Fig.~\ref{fig:figSim5}. Here, we consider $M=5$ randomly generated SISO systems $(n=2)$ and we show the evolution of the
cost~\eqref{eq:problem2} as the dimension $N$ of the neural controller
increases. Notably, even for small values of $N$, the neural controller exhibits significant approximation capabilities.

In our final numerical example, we explore a scenario involving randomly generated SISO systems ($n=2$) and neural controllers of dimension $N=4$ . Fig.~\ref{fig:figSim4} demonstrates how the cost~\eqref{eq:problem2} evolves as the number of systems increases.
\begin{figure}
  \centering
%
%
\definecolor{mycolor1}{rgb}{0.00000,0.44700,0.74100}%
\definecolor{mycolor2}{rgb}{0.85000,0.32500,0.09800}%
\begin{tikzpicture}

\begin{axis}[
width=0.8\columnwidth ,
height=1.0in,
scale only axis,
xmin=0.5,
xmax=8.5,
xtick={1,2,3,4,5,6,7,8},
xlabel style={font=\color{white!15!black}},
xlabel={$N$ (dimension of neural controller)},
ymode=log,
ymin=1e-04,
ymax=4700,
yminorticks=true,
ymajorgrids,
ytick distance=10^2,
ylabel style={font=\color{white!15!black}},
ylabel={$\Jcal$},
axis background/.style={fill=white},
font=\small
]
\addplot [color=black, dashed, forget plot]
  table[row sep=crcr]{%
1	983.287096373297\\
1	1422.7973959941\\
};
\addplot [color=black, dashed, forget plot]
  table[row sep=crcr]{%
2	654.206132574768\\
2	685.687434754891\\
};
\addplot [color=black, dashed, forget plot]
  table[row sep=crcr]{%
3	10.107047393407\\
3	11.3426798796911\\
};
\addplot [color=black, dashed, forget plot]
  table[row sep=crcr]{%
4	1.12390457797941\\
4	1.38646180778787\\
};
\addplot [color=black, dashed, forget plot]
  table[row sep=crcr]{%
5	0.675901753441095\\
5	1.37867991579266\\
};
\addplot [color=black, dashed, forget plot]
  table[row sep=crcr]{%
6	0.0581744283438782\\
6	0.107016975026899\\
};
\addplot [color=black, dashed, forget plot]
  table[row sep=crcr]{%
7	0.0258909534693177\\
7	0.052373297343444\\
};
\addplot [color=black, dashed, forget plot]
  table[row sep=crcr]{%
8	0.0107216979483102\\
8	0.0178541870938623\\
};
\addplot [color=black, dashed, forget plot]
  table[row sep=crcr]{%
1	9.5828166359754\\
1	36.134682154245\\
};
\addplot [color=black, dashed, forget plot]
  table[row sep=crcr]{%
2	0.499628473879987\\
2	6.33112110906495\\
};
\addplot [color=black, dashed, forget plot]
  table[row sep=crcr]{%
3	0.431850020000116\\
3	1.32936578977201\\
};
\addplot [color=black, dashed, forget plot]
  table[row sep=crcr]{%
4	0.0218991280567537\\
4	0.177300886028807\\
};
\addplot [color=black, dashed, forget plot]
  table[row sep=crcr]{%
5	0.00599167409707444\\
5	0.0368919744416013\\
};
\addplot [color=black, dashed, forget plot]
  table[row sep=crcr]{%
6	0.00402943077542443\\
6	0.00837752681398804\\
};
\addplot [color=black, dashed, forget plot]
  table[row sep=crcr]{%
7	0.00169525455145163\\
7	0.00578898114750207\\
};
\addplot [color=black, dashed, forget plot]
  table[row sep=crcr]{%
8	0.000355112832764759\\
8	0.00426322871783359\\
};
\addplot [color=black, forget plot]
  table[row sep=crcr]{%
0.875	1422.7973959941\\
1.125	1422.7973959941\\
};
\addplot [color=black, forget plot]
  table[row sep=crcr]{%
1.875	685.687434754891\\
2.125	685.687434754891\\
};
\addplot [color=black, forget plot]
  table[row sep=crcr]{%
2.875	11.3426798796911\\
3.125	11.3426798796911\\
};
\addplot [color=black, forget plot]
  table[row sep=crcr]{%
3.875	1.38646180778787\\
4.125	1.38646180778787\\
};
\addplot [color=black, forget plot]
  table[row sep=crcr]{%
4.875	1.37867991579266\\
5.125	1.37867991579266\\
};
\addplot [color=black, forget plot]
  table[row sep=crcr]{%
5.875	0.107016975026899\\
6.125	0.107016975026899\\
};
\addplot [color=black, forget plot]
  table[row sep=crcr]{%
6.875	0.052373297343444\\
7.125	0.052373297343444\\
};
\addplot [color=black, forget plot]
  table[row sep=crcr]{%
7.875	0.0178541870938623\\
8.125	0.0178541870938623\\
};
\addplot [color=black, forget plot]
  table[row sep=crcr]{%
0.875	9.5828166359754\\
1.125	9.5828166359754\\
};
\addplot [color=black, forget plot]
  table[row sep=crcr]{%
1.875	0.499628473879987\\
2.125	0.499628473879987\\
};
\addplot [color=black, forget plot]
  table[row sep=crcr]{%
2.875	0.431850020000116\\
3.125	0.431850020000116\\
};
\addplot [color=black, forget plot]
  table[row sep=crcr]{%
3.875	0.0218991280567537\\
4.125	0.0218991280567537\\
};
\addplot [color=black, forget plot]
  table[row sep=crcr]{%
4.875	0.00599167409707444\\
5.125	0.00599167409707444\\
};
\addplot [color=black, forget plot]
  table[row sep=crcr]{%
5.875	0.00402943077542443\\
6.125	0.00402943077542443\\
};
\addplot [color=black, forget plot]
  table[row sep=crcr]{%
6.875	0.00169525455145163\\
7.125	0.00169525455145163\\
};
\addplot [color=black, forget plot]
  table[row sep=crcr]{%
7.875	0.000355112832764759\\
8.125	0.000355112832764759\\
};
\addplot [color=mycolor1, forget plot,fill=white]
  table[row sep=crcr]{%
0.75	36.134682154245\\
0.75	983.287096373297\\
1.25	983.287096373297\\
1.25	36.134682154245\\
0.75	36.134682154245\\
};
\addplot [color=mycolor1, forget plot,fill=white]
  table[row sep=crcr]{%
1.75	6.33112110906495\\
1.75	654.206132574768\\
2.25	654.206132574768\\
2.25	6.33112110906495\\
1.75	6.33112110906495\\
};
\addplot [color=mycolor1, forget plot,fill=white]
  table[row sep=crcr]{%
2.75	1.32936578977201\\
2.75	10.107047393407\\
3.25	10.107047393407\\
3.25	1.32936578977201\\
2.75	1.32936578977201\\
};
\addplot [color=mycolor1, forget plot,fill=white]
  table[row sep=crcr]{%
3.75	0.177300886028807\\
3.75	1.12390457797941\\
4.25	1.12390457797941\\
4.25	0.177300886028807\\
3.75	0.177300886028807\\
};
\addplot [color=mycolor1, forget plot,fill=white]
  table[row sep=crcr]{%
4.75	0.0368919744416013\\
4.75	0.675901753441095\\
5.25	0.675901753441095\\
5.25	0.0368919744416013\\
4.75	0.0368919744416013\\
};
\addplot [color=mycolor1, forget plot,fill=white]
  table[row sep=crcr]{%
5.75	0.00837752681398804\\
5.75	0.0581744283438782\\
6.25	0.0581744283438782\\
6.25	0.00837752681398804\\
5.75	0.00837752681398804\\
};
\addplot [color=mycolor1, forget plot,fill=white]
  table[row sep=crcr]{%
6.75	0.00578898114750207\\
6.75	0.0258909534693177\\
7.25	0.0258909534693177\\
7.25	0.00578898114750207\\
6.75	0.00578898114750207\\
};
\addplot [color=mycolor1, forget plot,fill=white]
  table[row sep=crcr]{%
7.75	0.00426322871783359\\
7.75	0.0107216979483102\\
8.25	0.0107216979483102\\
8.25	0.00426322871783359\\
7.75	0.00426322871783359\\
};
\addplot [color=mycolor2, forget plot]
  table[row sep=crcr]{%
0.75	451.237959839594\\
1.25	451.237959839594\\
};
\addplot [color=mycolor2, forget plot]
  table[row sep=crcr]{%
1.75	62.714551715937\\
2.25	62.714551715937\\
};
\addplot [color=mycolor2, forget plot]
  table[row sep=crcr]{%
2.75	3.39832070399847\\
3.25	3.39832070399847\\
};
\addplot [color=mycolor2, forget plot]
  table[row sep=crcr]{%
3.75	0.290681003938079\\
4.25	0.290681003938079\\
};
\addplot [color=mycolor2, forget plot]
  table[row sep=crcr]{%
4.75	0.111315103769115\\
5.25	0.111315103769115\\
};
\addplot [color=mycolor2, forget plot]
  table[row sep=crcr]{%
5.75	0.0301751592043727\\
6.25	0.0301751592043727\\
};
\addplot [color=mycolor2, forget plot]
  table[row sep=crcr]{%
6.75	0.0134706552691037\\
7.25	0.0134706552691037\\
};
\addplot [color=mycolor2, forget plot]
  table[row sep=crcr]{%
7.75	0.00456917001463904\\
8.25	0.00456917001463904\\
};
\end{axis}

\end{tikzpicture}
\end{figure}
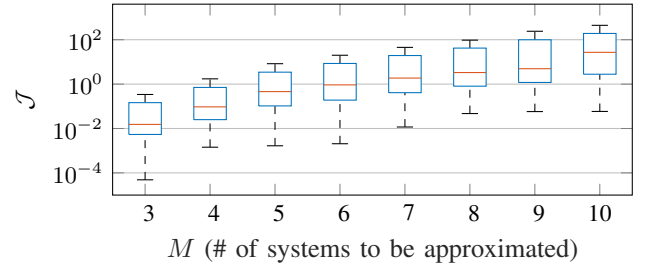
\begin{figure}
  \centering
%
%
\definecolor{mycolor1}{rgb}{0.00000,0.44700,0.74100}%
\definecolor{mycolor2}{rgb}{0.85000,0.32500,0.09800}%
\begin{tikzpicture}
\begin{axis}[%
width=0.8\columnwidth ,
height=1.0in,
scale only axis,
xmin=0.5,
xmax=8.5,
xtick={1,2,3,4,5,6,7,8},
xticklabels={{ 3},{ 4},{ 5},{ 6},{ 7},{ 8},{ 9},{10}},
xlabel style={font=\color{white!15!black}},
xlabel={$M$ (\# of systems to be approximated)},
ylabel={$\Jcal$},
ymode=log,
ymin=1e-05,
ymax=4000,
yminorticks=true,
ymajorgrids,
ytick distance=10^2,
axis background/.style={fill=white},
font=\small
]
\addplot [color=black, dashed, forget plot]
  table[row sep=crcr]{%
1	0.146072651905592\\
1	0.341170985297599\\
};
\addplot [color=black, dashed, forget plot]
  table[row sep=crcr]{%
2	0.708868244408245\\
2	1.72760403503667\\
};
\addplot [color=black, dashed, forget plot]
  table[row sep=crcr]{%
3	3.45514666043759\\
3	8.31306807334345\\
};
\addplot [color=black, dashed, forget plot]
  table[row sep=crcr]{%
4	8.52062257741646\\
4	19.9749320388551\\
};
\addplot [color=black, dashed, forget plot]
  table[row sep=crcr]{%
5	19.386369218122\\
5	45.2119792609884\\
};
\addplot [color=black, dashed, forget plot]
  table[row sep=crcr]{%
6	41.9854544944969\\
6	95.7783989951948\\
};
\addplot [color=black, dashed, forget plot]
  table[row sep=crcr]{%
7	99.417094397227\\
7	240.799127777064\\
};
\addplot [color=black, dashed, forget plot]
  table[row sep=crcr]{%
8	191.772702328322\\
8	445.512440156252\\
};
\addplot [color=black, dashed, forget plot]
  table[row sep=crcr]{%
1	4.87126442038237e-05\\
1	0.0053940079994554\\
};
\addplot [color=black, dashed, forget plot]
  table[row sep=crcr]{%
2	0.00143719107376088\\
2	0.0250668044977884\\
};
\addplot [color=black, dashed, forget plot]
  table[row sep=crcr]{%
3	0.00166411003184291\\
3	0.104128024103513\\
};
\addplot [color=black, dashed, forget plot]
  table[row sep=crcr]{%
4	0.00205799943901607\\
4	0.190019875163099\\
};
\addplot [color=black, dashed, forget plot]
  table[row sep=crcr]{%
5	0.0116779481338696\\
5	0.410812161320697\\
};
\addplot [color=black, dashed, forget plot]
  table[row sep=crcr]{%
6	0.0471030997397651\\
6	0.807327539284634\\
};
\addplot [color=black, dashed, forget plot]
  table[row sep=crcr]{%
7	0.0578567309674704\\
7	1.18747611927037\\
};
\addplot [color=black, dashed, forget plot]
  table[row sep=crcr]{%
8	0.058737743687953\\
8	2.79138609728855\\
};
\addplot [color=black, forget plot]
  table[row sep=crcr]{%
0.875	0.341170985297599\\
1.125	0.341170985297599\\
};
\addplot [color=black, forget plot]
  table[row sep=crcr]{%
1.875	1.72760403503667\\
2.125	1.72760403503667\\
};
\addplot [color=black, forget plot]
  table[row sep=crcr]{%
2.875	8.31306807334345\\
3.125	8.31306807334345\\
};
\addplot [color=black, forget plot]
  table[row sep=crcr]{%
3.875	19.9749320388551\\
4.125	19.9749320388551\\
};
\addplot [color=black, forget plot]
  table[row sep=crcr]{%
4.875	45.2119792609884\\
5.125	45.2119792609884\\
};
\addplot [color=black, forget plot]
  table[row sep=crcr]{%
5.875	95.7783989951948\\
6.125	95.7783989951948\\
};
\addplot [color=black, forget plot]
  table[row sep=crcr]{%
6.875	240.799127777064\\
7.125	240.799127777064\\
};
\addplot [color=black, forget plot]
  table[row sep=crcr]{%
7.875	445.512440156252\\
8.125	445.512440156252\\
};
\addplot [color=black, forget plot]
  table[row sep=crcr]{%
0.875	4.87126442038237e-05\\
1.125	4.87126442038237e-05\\
};
\addplot [color=black, forget plot]
  table[row sep=crcr]{%
1.875	0.00143719107376088\\
2.125	0.00143719107376088\\
};
\addplot [color=black, forget plot]
  table[row sep=crcr]{%
2.875	0.00166411003184291\\
3.125	0.00166411003184291\\
};
\addplot [color=black, forget plot]
  table[row sep=crcr]{%
3.875	0.00205799943901607\\
4.125	0.00205799943901607\\
};
\addplot [color=black, forget plot]
  table[row sep=crcr]{%
4.875	0.0116779481338696\\
5.125	0.0116779481338696\\
};
\addplot [color=black, forget plot]
  table[row sep=crcr]{%
5.875	0.0471030997397651\\
6.125	0.0471030997397651\\
};
\addplot [color=black, forget plot]
  table[row sep=crcr]{%
6.875	0.0578567309674704\\
7.125	0.0578567309674704\\
};
\addplot [color=black, forget plot]
  table[row sep=crcr]{%
7.875	0.058737743687953\\
8.125	0.058737743687953\\
};
\addplot [color=mycolor1, forget plot,fill=white]
  table[row sep=crcr]{%
0.75	0.0053940079994554\\
0.75	0.146072651905592\\
1.25	0.146072651905592\\
1.25	0.0053940079994554\\
0.75	0.0053940079994554\\
};
\addplot [color=mycolor1, forget plot,fill=white]
  table[row sep=crcr]{%
1.75	0.0250668044977884\\
1.75	0.708868244408245\\
2.25	0.708868244408245\\
2.25	0.0250668044977884\\
1.75	0.0250668044977884\\
};
\addplot [color=mycolor1, forget plot,fill=white]
  table[row sep=crcr]{%
2.75	0.104128024103513\\
2.75	3.45514666043759\\
3.25	3.45514666043759\\
3.25	0.104128024103513\\
2.75	0.104128024103513\\
};
\addplot [color=mycolor1, forget plot,fill=white]
  table[row sep=crcr]{%
3.75	0.190019875163099\\
3.75	8.52062257741646\\
4.25	8.52062257741646\\
4.25	0.190019875163099\\
3.75	0.190019875163099\\
};
\addplot [color=mycolor1, forget plot,fill=white]
  table[row sep=crcr]{%
4.75	0.410812161320697\\
4.75	19.386369218122\\
5.25	19.386369218122\\
5.25	0.410812161320697\\
4.75	0.410812161320697\\
};
\addplot [color=mycolor1, forget plot,fill=white]
  table[row sep=crcr]{%
5.75	0.807327539284634\\
5.75	41.9854544944969\\
6.25	41.9854544944969\\
6.25	0.807327539284634\\
5.75	0.807327539284634\\
};
\addplot [color=mycolor1, forget plot,fill=white]
  table[row sep=crcr]{%
6.75	1.18747611927037\\
6.75	99.417094397227\\
7.25	99.417094397227\\
7.25	1.18747611927037\\
6.75	1.18747611927037\\
};
\addplot [color=mycolor1, forget plot,fill=white]
  table[row sep=crcr]{%
7.75	2.79138609728855\\
7.75	191.772702328322\\
8.25	191.772702328322\\
8.25	2.79138609728855\\
7.75	2.79138609728855\\
};
\addplot [color=mycolor2, forget plot]
  table[row sep=crcr]{%
0.75	0.0153248510206488\\
1.25	0.0153248510206488\\
};
\addplot [color=mycolor2, forget plot]
  table[row sep=crcr]{%
1.75	0.0944567004338981\\
2.25	0.0944567004338981\\
};
\addplot [color=mycolor2, forget plot]
  table[row sep=crcr]{%
2.75	0.458786194775914\\
3.25	0.458786194775914\\
};
\addplot [color=mycolor2, forget plot]
  table[row sep=crcr]{%
3.75	0.923218297231139\\
4.25	0.923218297231139\\
};
\addplot [color=mycolor2, forget plot]
  table[row sep=crcr]{%
4.75	1.87224151095024\\
5.25	1.87224151095024\\
};
\addplot [color=mycolor2, forget plot]
  table[row sep=crcr]{%
5.75	3.31771976787348\\
6.25	3.31771976787348\\
};
\addplot [color=mycolor2, forget plot]
  table[row sep=crcr]{%
6.75	4.93885663781861\\
7.25	4.93885663781861\\
};
\addplot [color=mycolor2, forget plot]
  table[row sep=crcr]{%
7.75	27.0431741967785\\
8.25	27.0431741967785\\
};
\end{axis}
\end{tikzpicture}%
  \caption{This figure shows a box plot (with the format introduced in Fig.~\ref{fig:figSim5}) aggregating the final cost~\eqref{eq:problem2} across $100$ simulations for $100$ randomly selected sets of $\SiD$. For each set, we calculate the cost~\eqref{eq:problem2}, considering increasing subsets of systems to be approximated, with the number of systems $M$ increasing from $3$ to $10$.}\label{fig:figSim4}
\end{figure}

\section{Bounds on multi-task control problem}\label{sec:Bounds}
In this section we establish upper and lower bounds for the
optimization problem~\eqref{eq:problem2}, as a function of the number
and properties of the systems to be approximated.

\subsection{Upper bound}
To derive an upper bound, we notice that the approximation error in
\eqref{eq:problem2} obtained when choosing $M$ different matrices
$D_i$ is certainly bounded above by the error incurred when such $D_i$
matrices are all equal to each other. That is, solving the following
minimization problem provides an upper bound on the solution to the
minimization problem~\eqref{eq:problem2}:
\begin{align}\label{eq: upper bound problem}
  \begin{split}
    \min_{W,D,B , C}
    \;\; \sum_{i=1}^{M}\norm{\SiD-\SL }^2_{2} .
  \end{split}
\end{align}
The minimization problem \eqref{eq: upper bound problem} is akin to a
model reduction problem to approximate a given set of systems.

We start by introducing the necessary notation and preliminary steps
to present our result. Define the parallel system
\begin{align}\label{eq:parReal}
  \Sext = (\Aext, \Bext, \Cext),
\end{align}
with 
\begin{align*}
  \begin{split}
    \Aext &=
    \begin{bmatrix}
      A_1 & & \\
      & \ddots &\\
      &&A_M
    \end{bmatrix},\; 
    \Bext =
    \begin{bmatrix}
      B_1\\
      \vdots\\
      B_M
    \end{bmatrix}, \\
    \Cext &=
    \begin{bmatrix}
      C_1 & & \\
      & \ddots &\\
      &&C_M
    \end{bmatrix} ,
  \end{split}
\end{align*}
and its balanced and minimal realization \cite{MST-IP:87}
\begin{align}\label{eq:minBalReal}
  \SextB = (\AextB, \BextB, \CextB),
\end{align}
with
\begin{align}\label{eq: balanced matrices}
  \begin{split}
    \AextB &=
    \begin{bmatrix}
      {\AextB}_{11}&{\AextB}_{12}\\
      {\AextB}_{21}&{\AextB}_{22}
    \end{bmatrix}, \;\BextB=\begin{bmatrix}
      {\BextB}_{1}\\
      {\BextB}_{2}
    \end{bmatrix}, \\
    \CextB &=
    \begin{bmatrix}
      {\CextB}_{1} & {\CextB}_{2}
    \end{bmatrix} .
  \end{split}
\end{align}
Notice that the dimension of $\SextB$ is potentially smaller than the
dimension of $\Sext$ since the latter may not be a minimal
realization. Let $R$ be the dimension of $\SextB$ and let $N$ the
dimension of the sub block ${\AextB}_{11}$ when $R>N$.\footnote{If $R \leq N$ we let $\AextB = {\AextB}_{11}$, ${\BextB}={\BextB}_{1} $ and $\CextB={\CextB}_{1}$.}
Then, the controllability Gramian $\PextB$ and
observability Gramian $\QextB$ of \eqref{eq:minBalReal} are diagonal
and equal to each other:
\begin{equation}\label{eq: gramian balanced}
  \PextB = \QextB =
  \begin{bmatrix}
    S_1 & \\
    & S_2
  \end{bmatrix} >0,
\end{equation}
with $S_1 = \diag\left(\sigma_1,\ldots,\sigma_N\right)$,
$S_2 = \diag\left(\sigma_{N+1},\ldots,\sigma_R\right)$, and $\sigma_i$
Hankel singular value of the system \eqref{eq:minBalReal}\cite{AAC-DCS:01}.
\begin{theorem}{\emph{\bfseries (Upper bound
      of~\eqref{eq:problem2})}}\label{thm: upper bound}
  Let $\SL_1,\dots,\SL_M$ and $\SD_1,\dots,\SD_M$ be the LTI systems
  defined in equations~\eqref{eq:linSys} and~\eqref{eq:desSystems},
  respectively. Then if $R>N$:
\begin{equation}\label{eq:upperBound}
  \min_{\substack{W,D_1, \dots,D_M,\\ \,B ,\, C}} \sum_{i=1}^{M}\norm{\SiD-\SiL
  }^2_{2}\leq \left(\Jcal^{\rm B} + \sqrt{\sigma_1}\norm{\Delta C}_F\right)^2,
\end{equation}
if $R\leq N$
\begin{equation}\label{eq:upperBound1}
  \min_{\substack{W,D_1, \dots,D_M,\\ \,B ,\, C}} \sum_{i=1}^{M}\norm{\SiD-\SiL
  }^2_{2}\leq
  \sigma_1\norm{\Delta C}^2_F ,
\end{equation}
and where, using the notation in \eqref{eq: balanced
  matrices},\footnote{We use $\norm{\cdot}_F$ and
  $\norm{\cdot}_{\infty}$ to denote the Frobenius and the $\HnormInf$
  norms.}
\begin{align*}
  \Jcal^{\rm B} =
                  \sqrt{\tr\left[{{\BextB}_{2}}^{\top}S_2{\BextB}_{2}\right]
                  + 2N\norm{\Sigma_{\rm aux}}_{\infty}},
\end{align*}
with $\Saux = (\AextB, B_{\rm aux}, C_{\rm aux})$,
\begin{align*}
  B_{\rm aux} &=
                \begin{bmatrix}
                  0\\
                  S_2{\AextB}_{21}
                \end{bmatrix},
  C_{\rm aux} =
                \begin{bmatrix}
    0 & {\AextB}_{12}S_2
  \end{bmatrix},
\end{align*}
and
\begin{align*}
  \Delta C = \left(I-\frac{1}{M}
  \begin{bmatrix}
    I\\
    \vdots \\
    I\\
  \end{bmatrix}
  \begin{bmatrix}
    I &
    \cdots &
    I
  \end{bmatrix} \right){\CextB}_{1}.
\end{align*}
\end{theorem}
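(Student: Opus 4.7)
My plan is to obtain the bound by strategically restricting the feasible set and then constructing an explicit candidate from a balanced minimal realization. First, since allowing different $D_i$ per task can only decrease the objective, I will restrict to $D_1=\dots=D_M=D$, which yields the surrogate problem~\eqref{eq: upper bound problem}. Then any feasible tuple $(W,D,B,C)$ I exhibit provides an upper bound on the original minimization, and by Theorem~\ref{thm:dD} I may freely choose the effective closed-loop matrix $A_L := -I + DW$ (since for diagonal $D$ with positive entries, $W = D^{-1}(A_L + I)$ realizes any target), so the real task reduces to finding a single LTI system $\SL=(A_L,B,C)$ of dimension at most $N$ that minimizes $\sum_i \|\SiD - \SL\|_2^2$.

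The natural construction uses the stacked parallel system $\Sext$ in~\eqref{eq:parReal} together with its balanced minimal realization $\SextB$ in~\eqref{eq:minBalReal}. The key observation is that $\sum_{i=1}^M\|\SiD - \SL\|_2^2$ can be written as $\|\Sext - \tilde\Sigma\|_2^2$ where $\tilde\Sigma$ stands for the $M$-fold block-repetition of $\SL$ (same $A_L$ and $B$ on each copy, same $C$ read $M$ times). I will therefore split the error through an intermediate system $\widehat\Sigma$: (i)~$\widehat\Sigma$ is obtained from $\SextB$ by standard $\H_2$ balanced truncation to the top-$N$ Hankel modes, yielding state matrix $\AextB_{11}$, input matrix $\BextB_{1}$, and output matrix $\CextB_{1}$; and (ii)~$\SL$ is obtained from $\widehat\Sigma$ by replacing the row-partitioned output matrix $\CextB_{1} = \col(C^{B}_{1},\ldots,C^{B}_{M})$ by the averaged common gain $C := \tfrac{1}{M}\sum_i C^{B}_i$, read $M$ times. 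The deviation from this averaging is exactly the matrix $\Delta C$ in the statement, since the projector $I - \tfrac{1}{M}\mathbf{1}\mathbf{1}^{\top}\otimes I$ is precisely the centering operator applied row-wise to $\CextB_1$.

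With this decomposition, triangle inequality in $\H_2$ gives
\begin{equation*}
\Big(\sum_{i}\|\SiD-\SL\|_2^2\Big)^{1/2} \;\leq\; \|\Sext - \widehat\Sigma\|_2 \;+\; \|\widehat\Sigma - \tilde\Sigma\|_2 .
\end{equation*}
The second term is the averaging error: since $\widehat\Sigma - \tilde\Sigma$ has state matrix $\AextB_{11}$, input matrix $\BextB_{1}$, and output matrix $\Delta C$, and since the controllability Gramian of $(\AextB_{11},\BextB_{1})$ equals $S_1 \preceq \sigma_1 I$, its $\H_2$-norm squared equals $\tr(\Delta C\, S_1\, \Delta C^{\top}) \leq \sigma_1 \|\Delta C\|_F^2$. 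The first term is the balanced-truncation error: this is where $\Jcal^{\rm B}$ comes in. Writing the error system using the off-diagonal coupling blocks $\AextB_{12}, \AextB_{21}$ and applying a Lyapunov-based energy estimate yields a quadratic bound whose diagonal contribution produces $\tr(\BextB_{2}{}^{\top} S_2 \BextB_{2})$ (the familiar truncated-Gramian term) and whose cross contribution is controlled by the $\H_\infty$ norm of the auxiliary system $\Saux$ evaluated on the $2N$ retained modes, giving the $2N\|\Saux\|_\infty$ summand. Combining both pieces produces~\eqref{eq:upperBound}; when $R\leq N$ the truncation step is vacuous ($\Jcal^{\rm B}=0$ and $\widehat\Sigma = \SextB$), leaving only the averaging term and yielding~\eqref{eq:upperBound1}.

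The main technical obstacle will be rigorously establishing the $\H_2$ truncation bound $\|\Sext-\widehat\Sigma\|_2 \leq \Jcal^{\rm B}$, because $\H_2$ balanced truncation (unlike its $\H_\infty$ counterpart) does not admit a clean Hankel-singular-value bound; obtaining the hybrid expression involving both a trace term and an $\H_\infty$-norm of the auxiliary system $\Saux$ requires a careful Lyapunov argument on the error dynamics, exploiting the balanced structure~\eqref{eq: gramian balanced} to control the coupling between retained and truncated modes. The remaining steps, including the averaging bound and the triangle inequality, are then essentially mechanical.
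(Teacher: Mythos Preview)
Your proposal is correct and follows essentially the same route as the paper: restrict to a common $D$, build the candidate controller from the top-$N$ block of the balanced minimal realization of $\Sext$ with the averaged output matrix, and split the error via the triangle inequality into a balanced-truncation term and an averaging term bounded by $\sigma_1\|\Delta C\|_F^2$. The only difference is that you flag the $\mathcal H_2$ truncation bound $\|\Sext-\widehat\Sigma\|_2\le\Jcal^{\rm B}$ as the main technical hurdle and plan a Lyapunov derivation, whereas the paper simply invokes this bound from the balanced-truncation literature~\cite{AAC-DCS:01}.
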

\bigskip

Some comments are in order to fully appreciate the result in Theorem
\ref{thm: upper bound}. First, $R > N$ when the dimension of the
neural controller is smaller than the number of different modes to be
approximated (as found through the balanced realization of
\eqref{eq:parReal}). Similarly, $R \le N$ when the neural controller
is larger than the number of different modes of the systems to be
approximated. Second, the term $\Jcal^{\rm B}$ in
\eqref{eq:upperBound} depends on the Hankel singular values that the
neural controller is not able to approximate ($S_2$ in \eqref{eq:
  gramian balanced}). This term vanishes when the dimension of the
neural controller is sufficiently large to capture all the modes of
the systems to be approximated (as in \eqref{eq:upperBound1}). As
similar error is also done when using the balanced truncation
technique to obtain a reduced dynamical model\cite{AAC-DCS:01}.
Third, the error $\Delta C$ in \eqref{eq:upperBound} and
\eqref{eq:upperBound1} is due to the fact that the system
\eqref{eq:minBalReal} to be approximated has more outputs than the
neural controller. To minimize such discrepancy and compute an upper
bound on the approximation error, Theorem \ref{thm: upper bound} uses
the average of the rows of the desired output matrix (namely,
$\frac{1}{M}\begin{bmatrix} I & \cdots & I\end{bmatrix}{\CextB}_{1}$,
which minimizes the discrepancy of the output matrices as measured by
the Frobenius norm). In the special case when the systems to be
approximated are all equal to each other, such error vanishes as the
average of the output matrices equals the actual output
matrices. Similarly, this error becomes small when the output matrices
in the balanced realization of the systems to be approximated are
similar across the systems to be approximated. Thus, Theorem \ref{thm:
  upper bound} shows that the multi-task control approximation error
depends on (i) the order of the neural controller through
$\Jcal^{\rm B}$, which dictates the number of different modes that can
be approximated, and (ii) the similarity of the systems to be
approximated through $\Delta C$. We are now ready to formally prove
Theorem \ref{thm: upper bound}.

\begin{proof}
  We derive the proof separately for two cases $R>N$ and $R\leq N$.
  
$\left(R>N\right)$ We use the minimization problem \eqref{eq: upper bound problem} as
  an upper bound for the minimization problem \eqref{eq:problem2}, and
  we compute the solution to \eqref{eq: upper bound problem} by using
  the $N$ dominant modes of the balanced realization
  \eqref{eq:minBalReal}. In particular, select $W$, $D$, and $B$ in
  \eqref{eq: upper bound problem} such that $-I+DW = {\AextB}_{11}$
  and $B = {\BextB}_{1}$, and
  $C = \frac{1}{M}\begin{bmatrix} I & \cdots &
    I\end{bmatrix}{\CextB}_{1}$.\footnote{The output matrix $C$ of the neural controller have different dimensions than ${\CextB}_1$, which prevents us from implementing the balanced truncation method to find the controller that minimize the cost \eqref{eq:
      upper bound problem}.} Then, the cost in \eqref{eq: upper bound
    problem} becomes
  \begin{align*}
    \sum_{i=1}^{M}\norm{\SiD-\SL }^2_{2} = \| \Sext - \Sext^\text{L}
    \|_2^2 ,
  \end{align*}
  where
  \begin{align*}
    \Sext^\text{L} = \left(
    {\AextB}_{11}, {\BextB}_1 ,
    \underbrace{
    \begin{bmatrix}
      C \\ \vdots\\C
    \end{bmatrix}}_{\Cext^L}
    \right) .
  \end{align*}
  Let $\Delta C = {\CextB}_1 - \Cext^L$ and notice that
  \begin{align}\label{eq: bound sum}
    \| \Sext - \Sext^\text{L}
    \|_2^2 \le \left(\| \Sext - {\Sext}_1  \|_2 + \| \Serr  \|_2\right)^2 ,
  \end{align}
  where ${\Sext}_1 = \left(
    {\AextB}_{11}, {\BextB}_1 , {\CextB}_1
                \right)$ and
  \begin{align*}
    \Serr = \left(
    {\AextB}_{11}, {\BextB}_1 , \Delta C
            \right) .
  \end{align*}
  Notice that
  \begin{align*}
    \norm{\Serr}^2_{2}=\tr\left(S_1\Delta C^{\top}\Delta C\right)\leq \sigma_1\norm{\Delta C}^2_F,
  \end{align*}
  where $S_1$ and $\sigma_1$ are as in \eqref{eq: gramian
    balanced}. In summary, leveraging the upper bound on the balanced truncation presented in~\cite{AAC-DCS:01}, inequality~\eqref{eq: bound sum} yields
  \begin{align*}
    \| \Sext - \Sext^\text{L}
    \|_2^2 \le
    \left(\Jcal^{\rm B} + \sqrt{\sigma_1}\norm{\Delta C}_F\right)^2 .
  \end{align*}
$\left(R\leq N\right)$ The balance realization ~\eqref{eq: balanced matrices} is already of order less than $N$ and represents a minimal realization of realization~\eqref{eq:parReal}. For this reason choosing
  \begin{align*}
-I+DW &= \begin{bmatrix}
\AextB & 0\\
0& -I \\
\end{bmatrix} & B = \begin{bmatrix}
\BextB\\
0
\end{bmatrix} \\ C &= \begin{bmatrix}
    \frac{1}{M}\begin{bmatrix} I & \cdots &
    I\end{bmatrix}\CextB & 0
    \end{bmatrix},
\end{align*}
the first term in~\eqref{eq: bound sum} is zero and then we obtain~\eqref{eq:upperBound1}.
\end{proof}

To conclude, we provide an example to evaluate the upper bound in
Theorem \ref{thm: upper bound}.
Using the dataset utilized for Fig.~\ref{fig:figSim4}, Fig.~\ref{fig:upperBound} illustrates the curves of the upper bound delineated in Theorem~\ref{thm: upper bound} and the error incurred by the
neural controller obtained using the gradient in Section
\ref{sec:grad} (both curves are plotted	 by averaging the results over the same $100$ experiments as demonstrated in Fig.~\ref{fig:figSim4}).

\begin{figure}
  \centering
  \definecolor{mycolor1}{rgb}{0.00000,0.44700,0.74100}%
\definecolor{mycolor2}{rgb}{0.85000,0.32500,0.09800}%
  \begin{tikzpicture}

\begin{axis}[%
width=0.8\columnwidth ,
height=1.0in,
scale only axis,
grid=both,
grid style={line width=.1pt, draw=gray!10},
xmin=3,
xmax=10,
xlabel={$M$ (\# of systems to be approximated)},
ymode=log,
ymin=0.1,
ymax=197500,
yminorticks=true,
ytick distance=10^2,
axis background/.style={fill=white},
legend style={at={(0.97,0.03)}, anchor=south east, legend cell align=left, align=left, draw=white!15!black,font=\small},
font = \small]
\addplot [color=mycolor2]
  table[row sep=crcr]{%
3	0.770046409013396\\
4	8.42575964818023\\
5	53.9848150787107\\
6	66.1468065503646\\
7	102.300970043218\\
8	122.463655276995\\
9	145.42432992666\\
10	235.789664375822\\
};
\addlegendentry{cost neural controller}

\addplot [color=mycolor1]
  table[row sep=crcr]{%
3	581.539691929514\\
4	867.500432669749\\
5	2018.20182587778\\
6	9179.05519446779\\
7	40417.9897348082\\
8	46374.5766607261\\
9	87582.5294219589\\
10	179100.49820345\\
};
\addlegendentry{upper bound of Theorem~\ref{thm: upper bound}}
\end{axis}
\end{tikzpicture}
    \caption{The figure illustrates two distinct curves: the red one represents the evolution of the minimum of the cost function~\eqref{eq:problem2} derived via a gradient descent algorithm that utilizes the gradient discussed in Section~\ref{sec:grad}; the blue curve shows the upper bound~\eqref{eq:upperBound}. These outcomes are averaged across the same $100$ simulations of Fig.~\ref{fig:figSim4}.}\label{fig:upperBound}
\end{figure}
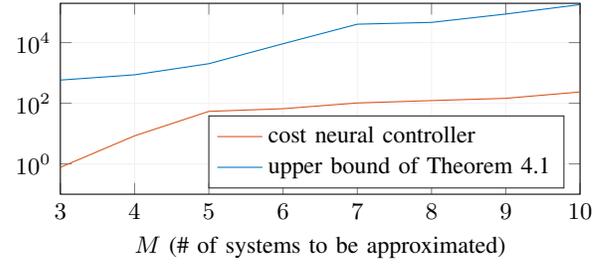

\subsection{Lower bounds}
Computing a lower bound for the multi-task control problem
\eqref{eq:problem2} presents considerable challenges, as the existing
model reduction tools cannot be applied in a straightforward way. In
this section we consider two alternative formulations of the
minimization problem \eqref{eq:problem2}, which capture the
fundamental limitations of multi-task control problems, although using
different performance metrics than in \eqref{eq:problem2}. In
particular, we consider the following multi-task control minimization
problems:
\begin{equation}\label{eq:easyLowerBound1}
\min_{\substack{W,D_1, \dots,D_M\\ \,B ,\, C}}
\underbrace{\sum_{i=1}^{M}\sup_{t \geq 0}
  \norm{\giD\left(t\right)-\giL\left(t\right) 
  }_{2}}_{\Jcal_2},
\end{equation}
and
\begin{equation}\label{eq:sumLowerBound}
\inf_{\substack{W,D_1, \dots,D_M\\ \,B ,\, C}}\underbrace{\sum_{i=1}^{M}\norm{\SiD-\SiL }_{1}}_{\Jcal_1}.
\end{equation}
In \eqref{eq:easyLowerBound1} $\norm{\cdot}_{2}$
denotes the $2$-induced matrix norm, with $\giD\left(t\right)$ and
$\giL\left(t\right)$ representing the impulse responses of $\SiD$ and
$\SiL$, respectively. For this problem we provide a simple lower
bound, which may be conservative in some~cases.
Instead, in \eqref{eq:sumLowerBound}, $\norm{\cdot}_1$ denotes the $1$-norm of
the impulse response of the system and it can be interpreted as the
induced norm of the system for signals of bounded magnitude. We will
solve this problem exactly, but only for a class of scalar
systems. We now proceed with a lower bound for \eqref{eq:easyLowerBound1}.
\begin{theorem}{\textbf{\emph{(Lower bound
        of~\eqref{eq:easyLowerBound1})}}}\label{thm: lower bound 1}
  Using the notation in \eqref{eq:linSys} and \eqref{eq:desSystems},
  we have
  \begin{equation}\label{eq:easyLowerBound}
    \min_{\substack{W,D_1, \dots,D_M,\\ \,B ,\, C}} \Jcal_2\geq
    \sqrt{\frac{\sum_{i=1}^{M}{\norm{C_iB_i-\left(\sum_{i=1}^{M}C_iB_i\right)\middle/M}_F^2}}{\min\left\lbrace
          p,\,m \right\rbrace}}.
  \end{equation}
\end{theorem}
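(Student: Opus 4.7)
The plan is to exploit the behavior of the impulse responses at $t=0^+$, where the neural controller is forced to share a common instantaneous gain $CB$ across all tasks, while the desired systems each demand a potentially different instantaneous gain $C_iB_i$. This will yield a lower bound that depends only on the static quantities $\{C_iB_i\}_{i=1}^M$ and is independent of the chosen $W$ and $D_i$.

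First, I would bound the supremum from below by the value at $t=0$. Since $g_i^{\rm D}(0)=C_iB_i$ and $g_i^{\rm L}(0)=CB$, we obtain
\begin{equation*}
\sup_{t\ge 0}\|g_i^{\rm D}(t)-g_i^{\rm L}(t)\|_2 \;\ge\; \|C_iB_i - CB\|_2,
\end{equation*}
so that $\Jcal_2\ge \sum_{i=1}^M\|C_iB_i-CB\|_2$ for any choice of the controller parameters. The key point is that only the product $CB$ appears in this bound, and this product is the same for every task.

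Next, I would pass from the sum of norms to a sum of squares via the elementary inequality $\sum_i a_i\ge\sqrt{\sum_i a_i^2}$ (valid for $a_i\ge 0$, by expanding the square), and then convert the spectral norm to the Frobenius norm using $\|X\|_F^2\le \min\{p,m\}\,\|X\|_2^2$, which follows from $\|X\|_F^2=\sum_j\sigma_j^2\le\min\{p,m\}\,\sigma_1^2$. Chaining these gives
\begin{equation*}
\sum_{i=1}^M\|C_iB_i-CB\|_2 \;\ge\; \sqrt{\frac{\sum_{i=1}^M\|C_iB_i-CB\|_F^2}{\min\{p,m\}}}.
\end{equation*}

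Finally, I would minimize the right-hand side over the free matrix $CB\in\real^{p\times m}$. Since $\sum_i\|C_iB_i-X\|_F^2$ is a strictly convex quadratic in $X$, its unique minimizer is the mean $\bar X=\tfrac{1}{M}\sum_{i=1}^M C_iB_i$, yielding
\begin{equation*}
\sum_{i=1}^M\|C_iB_i-CB\|_F^2 \;\ge\; \sum_{i=1}^M\Bigl\|C_iB_i-\tfrac{1}{M}\sum_{j=1}^M C_jB_j\Bigr\|_F^2,
\end{equation*}
and substituting this into the preceding inequality produces~\eqref{eq:easyLowerBound}. The whole argument only uses the fact that the neural controller's impulse response at $t=0$ is a single matrix $CB$ shared across tasks; no hypothesis on $W$ or on the diagonal matrices $D_i$ is needed, which is what makes the bound a genuine fundamental limitation. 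The only mild subtlety is justifying the evaluation at $t=0^+$: for the strictly proper realizations considered here both impulse responses are continuous on $[0,\infty)$ and equal $CB$ and $C_iB_i$ respectively at the origin, so the supremum dominates the initial value without further hypotheses.
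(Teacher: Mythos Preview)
Your proof is correct and follows essentially the same approach as the paper: evaluate the supremum at $t=0$ to reduce everything to the shared instantaneous gain $CB$, apply $\sum_i a_i\ge\sqrt{\sum_i a_i^2}$ and the spectral--Frobenius inequality, then minimize the resulting quadratic in $CB$ at the mean. The only (cosmetic) difference is that the paper first relaxes the dynamics matrices $-I+D_iW$ to arbitrary matrices $X_i$ before evaluating at $t=0$; you correctly observe that this relaxation is unnecessary since only $CB$ survives the evaluation, so your argument is slightly more direct.
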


Theorem \ref{thm: lower bound 1} provides a bound for the minimization
problem \eqref{eq:easyLowerBound1} by substituting the supremum over
time with the evaluation of the impulse response at time zero. When
doing so, the error only depends on the input and output matrices, and
it is minimized by choosing the input and output matrices of the
neural controller as the average of the input and output matrices of
the systems to be approximated. Clearly, this can result in a
conservative bound. We now prove Theorem \ref{thm: lower
  bound 1}.

\begin{proof}
We start our discussion by stating:
\begin{equation}
\label{eq:minProblemNewObjVar}
\min_{\substack{W,D_1, \dots,D_M,\\ \,B ,\, C}} \Jcal_{2}\geq \min_{\substack{X_1, \dots,X_M,\\ \,B ,\, C}} \underbrace{\sum_{i=1}^{M}\sup_{t \geq 0}\norm{\giD\left(t\right)-\giLH\left(t\right)
  }_{2}}_{ \widehat{\Jcal}_{2}}
\end{equation}
where $\giLH$ denotes the impulse response of systems $\SiLH$ defined as:
\begin{equation*}
  \SiLH = \left(
  X_i,\,B,\,C\right) ,
\end{equation*}
where we assume $X_i\in \RNN$ for each index $i = 1,\ldots,\,M$. This formulation of the problem acts as a lower bound to~\eqref{eq:problem2} since the set of optimization variables in~\eqref{eq:minProblemNewObjVar} encapsulates all optimization variables considered in~\eqref{eq:problem2}. Moreover, by setting $CB =\left(\sum_{i=1}^{M}C_iB_i\right)/M$, we obtain~\eqref{eq:easyLowerBound}. Indeed,
\begin{align}\label{eq:min2Norm}
\widehat{\Jcal}_{2} &= \sum_{i=1}^{M}\sup_{t \geq 0}\norm{C_ie^{A_it}B_i-Ce^{X_it}B}_{2} \\
&\geq \sum_{i=1}^{M}\norm{C_iB_i-CB}_{2} \geq \sum_{i=1}^{M}\sqrt{\norm{C_iB_i-CB}_{2}^2}\notag\\
&\geq \sqrt{\sum_{i=1}^{M}\norm{C_iB_i-CB}_{2}^2}
\geq \sqrt{\frac{\sum_{i=1}^{M}{\norm{C_iB_i-CB}_F^2}}{\min\left\lbrace p,\,m\right\rbrace}},\notag 
\end{align}
where the final inequality in equation~\eqref{eq:min2Norm} derives from the fact that for any $X\in\Rpm$ we have $\norm{X}_{2}\geq\frac{1}{\sqrt{d}}\norm{X}_F$, where $d$ is the rank of matrix $X$~\cite{KBP-MSP:12}. Then, minimizing last term~\eqref{eq:min2Norm}  with respect to $CB$ gives~\eqref{eq:easyLowerBound}. 
\end{proof}

We now present our last lower bound, which is valid for a special
class of stable single input, single output systems.

\begin{theorem}{\textbf{\emph{(Lower bound
        of~\eqref{eq:sumLowerBound})}}}\label{thm: lower bound 2}
  Let $\SiD = (a_i,b_i,c_i)$ satisfy $a_i < 0$ and $r_i = b_i c_i > 0$
  for all $i = 1,\ldots, M$. Let, without loss of generality,
  $r_1\leq r_2\leq \ldots \leq r_M$. Let $\SiL$ as
  in~\eqref{eq:linSys} with $N = 1$. Define the function
  \begin{equation*}
    A\left(j,\,l\right) = 
    \frac{r_j}{a_j}-\frac{r_l}{a_j}\left(\frac{\ln\left(r_l/r_j\right)}{W_{-1}\left(-1/2e\right)}
      + 1 \right),
  \end{equation*}
  valid for any pair of indices $1\leq j <l \leq M$, where
  $W_{-1}\left(\cdot\right)$ denotes the W Lambert function on the
  negative branch \cite{RMC-GHG-DEGH-DJJ-DEK:96}. Then,
  \begin{equation}\label{eq:finalMin}
    \inf_{\substack{W,D_1, \dots,D_M\\ \,B ,\, C}}\Jcal_1\geq\min \left\lbrace
      A\left(j,\,l\right),\,-A\left(l,\,j\right),-\frac{r_j}{a_j}\right
    \rbrace.
  \end{equation}
\end{theorem}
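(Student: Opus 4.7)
The plan is to reduce the multi-task problem to a scalar optimization in $\rho := CB$, characterize the single-system $L_1$ approximation error analytically via the Lambert W function, and then assemble the pairwise lower bound by exploiting the kink structure of this error profile in $\rho$.

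Since $N = 1$, the matrices $W$, $B$, $C$, and each $D_i$ are scalars, so $\Sigma_i^{\rm L}$ has the scalar realization $(\alpha_i, B, C)$ with $\alpha_i := -1 + D_i W$ ranging over all reals as $D_i$ varies, and impulse response $\rho\, e^{\alpha_i t}$, where $\rho := CB$ is common across the $M$ systems while $\alpha_i$ is free per system. Hence $\Jcal_1 = \sum_i \int_0^\infty |r_i e^{a_i t} - \rho\, e^{\alpha_i t}|\, dt$, and I would first minimize each summand over $\alpha_i$. The change of variables $\tau = -a_i t$ together with $\beta := \alpha_i/a_i$ reveals a universal scalar quantity:
\[
\min_{\alpha_i < 0}\, \int_0^\infty \bigl|r_i e^{a_i t} - \rho\, e^{\alpha_i t}\bigr|\, dt \;=\; -\frac{r_i}{a_i}\, h\!\left(\frac{\rho}{r_i}\right), \quad h(x) := \min_{\beta > 0} \int_0^\infty \bigl|e^{-\tau} - x\, e^{-\beta\tau}\bigr|\, d\tau.
\]

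Next I would compute $h$ in closed form. For $x \neq 1$ the two exponentials cross exactly once at $\tau^{*} = |\ln x|/|\beta - 1|$, so splitting the integral at $\tau^{*}$ yields an explicit expression for the inner cost as a function of $\beta$. Setting its derivative to zero and performing the substitution $u := \ln x + \tau^{*} + 1$ (for the $x > 1$ branch; the $x < 1$ case is symmetric) collapses the first-order condition to the transcendental equation $u\, e^{-u} = 1/(2e)$, equivalently $(-u)e^{-u} = -1/(2e)$. Since the relevant root satisfies $u > 1$, the Lambert W function on its negative branch applies, giving $u = -W_{-1}(-1/(2e))$. Back-substitution yields $h(x) = (x - 1) - x\ln(x)/u$ for $x \in [1, e^{u-1}]$, the mirror expression for $x \in (0, 1)$, and the saturation $h(x) = 1$ outside the interior-critical range; direct algebraic simplification then checks that $-(r_j/a_j)\, h(r_l/r_j) = A(j,l)$ and $-(r_l/a_l)\, h(r_j/r_l) = -A(l,j)$.

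Writing $E_i(\rho) := -(r_i/a_i)\, h(\rho/r_i)$, the cost satisfies $\Jcal_1 \geq E_j(\rho) + E_l(\rho)$ for the chosen pair $(j,l)$, so it suffices to lower-bound this sum uniformly in $\rho \geq 0$. Each $E_i$ vanishes at $\rho = r_i$ with a kink inherited from $h$ at $1$, is monotone on each side of $r_i$, and saturates at $-r_i/a_i$. I would then partition the $\rho$-axis into three regimes: for $\rho \leq r_j$, monotonicity of $E_l$ on $(0, r_l)$ yields $E_l(\rho) \geq E_l(r_j) = -A(l,j)$; for $\rho \geq r_l$, monotonicity of $E_j$ on $[r_j, \infty)$ yields $E_j(\rho) \geq E_j(r_l) = A(j,l)$; and for $\rho \in (r_j, r_l)$ the finite one-sided jumps in the derivative of $h$ at $1$ make the sum $E_j + E_l$ attain its local minima at $\rho = r_j$ and $\rho = r_l$, so it is bounded below by $\min\{A(j,l), -A(l,j)\}$. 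Combining these three cases, together with the trivial fallback $-r_j/a_j = \|\Sigma_j^{\rm D}\|_1$ attained at $\rho = 0$ (and covering the degenerate regime $r_l/r_j > e^{u-1}$, where the interior-critical-point derivation of $h$ breaks down and $E_j(r_l)$ saturates at $-r_j/a_j$), produces the stated minimum.

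The principal obstacle is Step 2: identifying the clean substitution that collapses the stationarity condition for $\beta$ into the Lambert-W-solvable form $u\, e^{-u} = 1/(2e)$, and then selecting the correct ($W_{-1}$) branch so that $u > 1$. Once $h$ is in hand the pairwise assembly is routine, the only delicate point being the degenerate regime $r_l/r_j > e^{u-1}$ that forces the third entry $-r_j/a_j$ into the $\min$.
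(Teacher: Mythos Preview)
Your approach is essentially the paper's: restrict to a pair $(j,l)$, set $\rho=CB$, minimize each $L_1$ error over its free pole to obtain the Lambert-$W_{-1}$ formula, then optimize over $\rho$. Your packaging via the universal profile $h$ is cleaner than the paper's ad hoc substitutions $y_j=x_j/(a_j-x_j)$ and $y_l=x_l/(x_l-a_l)$, but the transcendental equation, the branch selection, and the three-case structure are identical.

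One step needs tightening. In the interior regime $\rho\in(r_j,r_l)$ you argue that the kinks of $h$ at $1$ force the sum $E_j+E_l$ to attain its minima at $\rho=r_j$ and $\rho=r_l$. That reasoning is not sufficient: a kink of $E_j$ at $r_j$ only makes $r_j$ a local minimum of the \emph{sum} if the right jump in $E_j'$ dominates the (negative) slope of $E_l$ there, which is not automatic; and even if both endpoints were local minima, that alone does not preclude a smooth interior stationary point of $E_j+E_l$ giving a smaller value. The paper closes this gap the other way around: it asserts (also without full detail) that $E_j+E_l$ has \emph{no} interior critical point on $(r_j,r_l)$, then invokes Weierstrass to place the global minimum at an endpoint, yielding $\min\{A(j,l),\,-A(l,j)\}$. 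You should replace the kink heuristic with this no-interior-critical-point argument, which comes from differentiating your closed form for $h$ on each side of $1$ and checking that $E_j'+E_l'$ does not vanish in the open interval.
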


Some comments are in order. First, when $M = 2$, the bound in
\eqref{eq:finalMin} holds with equality, thus providing an optimal
solution to \eqref{eq:sumLowerBound}. In this case, it can be shown
that the optimal solution requires the neural controller to satisfy
$\norm{\SoneD-\SoneL }_{1} = 0$ or $\norm{\StwoD-\StwoL }_{1} = 0$,
that is, to equal one of the two systems to be approximated. Second,
when $M > 2$, the bound is obtained by selecting only two of the
systems to be approximated. Thus, the bound \eqref{eq:finalMin} can be
sharpened by maximizing over the indices $i$ and $j$ that correspond
to the selected systems. We now prove Theorem~\ref{thm: lower bound
  2}.

\begin{proof}
For any pair of indices $1\leq j < l \leq M$, it holds that:
\begin{equation}
\label{eq:lowerBoundTwo}
\inf_{\substack{W,D_1, \dots,D_M\\ \,B ,\, C}}
\Jcal_1 \geq \inf_{\substack{W,D_j ,D_l\\ \,B ,\, C}}\:\underbrace{\overbrace{\norm{\SjD-\SjL }_{1}}^{\widetilde{A}_j}+\overbrace{\norm{\SlD-\SlL }_{1}}^{\widetilde{A}^l}}_{\widetilde{\Jcal}_1}.
\end{equation}
Because $N=1$ we  can define the scalar quantities $r = BC$, $x_l = -1 + D_lW$ and $x_j = -1 + D_jW$ and noticing that~\eqref{eq:lowerBoundTwo} is equivalent to minimize over $r$, $x_j$ and $x_l$, we will show that the optimal value of the right-hand side of~\eqref{eq:lowerBoundTwo} are realized for:
\begin{equation}
\label{eq:tabCases}
\arraycolsep=1.6pt\def\arraystretch{1.25}
\begin{array}{l|ccc}
&\text{Case 1} & \text{Case 2} & \text{Case 3}\\
\hline
 r & r_j & r_l & r_l \\
x_l& x^*_l\left(r_j\right)&a_l & a_l\\
x_j & a_j & x^*_j\left(r_l\right)&-\infty
\end{array},
\end{equation}
where each case corresponds to a unique term within the minimum argument of inequality~\eqref{eq:finalMin}, with \begin{align}
x^*_j\left(r\,\right)&= \begin{cases}
a_j\frac{y_j^*\left(r\,\right)}{y_j^*\left(r\,\right)+1} &\text{ if }r > r_j\\
a_j &\text{ if }r = r_j
\end{cases} \label{eq:xjStar}\\
x^*_l\left(r\,\right)&= \begin{cases}
a_l\frac{y_l^*\left(r\,\right)}{y_l^*\left(r\,\right)-1} &\text{if }r < r_l\\
a_l &\text{if }r = r_l
\end{cases},\label{eq:xlStar}
\end{align}
and 
\begin{align}
y_j^*\left(r\,\right) &= \ln\left(r/r_j\right)^{-1}\left[1+W_{-1}\left(-1/2e\right)\right]\label{eq:yjOpt}\\
y_l^*\left(r\,\right) &= \ln\left(r_l/r\,\right)^{-1}\left[1+W_{-1}\left(-1/2e\right)\right].\label{eq:ywOpt}
\end{align}
To prove~\eqref{eq:tabCases} we first fix $r_j \leq r \leq r_l$ and minimize over $x_j\left(r\right)$ and $x_l\left(r\right)$, secondly we optimize the solution over $r$. The proofs for the conditions $r \leq r_j$ and $r \geq r_l$ are not included, as they lead to analogous conclusions through similar arguments. If $x_j > a_j$ and $x_l < a_l$  it is possible to establish that the minimum of $\widetilde{\Jcal}_1$ with $r$ fixed is achieved as $x_j\to a_j^+$ and $x_l\to a_l^-$. So we proceed assuming $x_j \leq a_j$ and $x_l \geq a_l$.  In particular, if $x_j < a_j$ and $x_l > a_l$, there will be a point where the impulse responses of $\SjL$ and $\SjD$, as well as $\SlL$ and $\SlD$, intersect. Applying the following changes of variable
\begin{equation*}
\begin{array}{rl}
 y_j\colon \left(-\infty,a_j\right) & \longrightarrow \left(-\infty,\,-1\right)  \\
  x_j & \longmapsto \frac{x_j}{a_j-x_j}\\
  y_l\colon \left(a_l,\,0\right) & \longrightarrow \left(-\infty,\,0\right)\\
  x_l & \longmapsto \frac{x_l}{x_l-a_l}.
\end{array}
\end{equation*}
to $\widetilde{\Jcal}_1 = \widetilde{A}_j + \widetilde{A}^l$ yields:
\begin{align}
\widetilde{A}_j\left(y_j,r\right) &= \frac{r\left(2e^{y_j \ln\left(r/r_j\right)}-1\right)}{a_j y_j}	-\frac{r}{a_j}+\frac{r_j}{a_j}\label{eq:min1}\\
\widetilde{A}^l\left(y_l,r\right) &=\frac{r\left(2e^{y_l \ln\left(r_l/r\right)}-1\right)}{a_ly_l}	+\frac{r}{a_l}-\frac{r_l}{a_l}.\notag
\end{align}
Computing $\frac{\partial}{\partial y_j}\widetilde{A}_j\left(y_j,r\right) = 0$ we derive the following equation:
\begin{equation}\label{eq:trasEq}
2 \ln\left(r/r_j\right) e^{y_j \ln\left(r/r_j\right)}-2e^{y_j \ln\left(r/r_j\right)} + 1 = 0.
\end{equation}
This transcendental equation, solvable via the Lambert W function $W_{-1}\left(\cdot\right)$ yields only one negative solution~\eqref{eq:yjOpt} for $j$, and similarly~\eqref{eq:ywOpt} for $l$. Finally, one can observe that they are also actual minima within $y_j$ and $y_l$ in the interval $\left(-\infty,\,0\right)$. For~\eqref{eq:min1} we have to distinguish two scenarios depending on the value of $r$ in $r_j \leq r \leq r_l$:
\begin{enumerate}[a)]
\item $y_j^*\in \left(-\infty,\,-1\right)$ if $r<r_je^{-\left[1+W_{-1}\left(-1/2e\right)\right ]}$\label{enum:scenarioa};
\item $y_j^* \notin \left(-\infty,\,-1\right)$ if $r\geq r_je^{-\left[1+W_{-1}\left(-1/
2e\right)\right]}$\label{enum:scenariob}.
\end{enumerate}

In particular if $r_l< r_je^{-\left[1+W_{-1}\left(-1/2e\right)\right]}$, then we are definitely in Scenario~\ref{enum:scenarioa}, and it is possible to prove that $\widetilde{\Jcal}_1$ in the interval $r_j<r< r_l$, with $x_j = x^*_j\left(r\right)$, $x_l = x^*_l\left(r\right)$,  as in \eqref{eq:xjStar} and \eqref{eq:xlStar}, admits no local minimum in $r$. Therefore, according to the Weierstrass Extreme Value Theorem, the minimum for  $r_j\leq r\leq r_l$ must be located at the boundaries of the given interval. Resulting in
\begin{equation}
\label{eq:minimum1}
\min\left\lbrace \widetilde{A}^l\left(x^*_l\left(r_j\right),r_j\right),\,\widetilde{A}_j\left(x^*_j\left(r_l\right),r_l\right)\right\rbrace = \inf_{x_j,\,x_l,\,r}\widetilde{\Jcal_1},
\end{equation}
and falling back to Cases 1 and 2 in the table~\eqref{eq:tabCases}.

Conversely, in Scenario~\ref{enum:scenariob}, when $ r_je^{-\left[1+W_{-1}\left(-1/2e\right)\right]}\leq r\leq r_l $, 
for reasons of monotonicity it can be shown that:
\begin{equation*}
\widetilde{A}_j\left(x_j,r\,\right) \xrightarrow[x_j \to -\infty]{} -\frac{r_j}{a_j} = \inf_{x_j} \widetilde{A}_j\left(x_j,r\,\right).
\end{equation*}
Since this last infimum is independent of $x_l$ and $r$, we find
\begin{align*}
\inf_{x_j,\,x_l,\,r}\widetilde{\Jcal_1}\geq -\frac{r_j}{a_j} + \underbrace{\inf_{x_l,r}\widetilde{A}^l\left(x_l,r\,\right)}_{=0}.
\end{align*}
This inequality is then related to the parameters defined in Case 3 of Table~\eqref{eq:tabCases}.

For $ r_j \leq r < r_je^{-\left[1+W_{-1}\left(-1/2e\right)\right]} $, we can proceed as before, leading to
\begin{equation}
\label{eq:minimum2}
\inf_{x_j,\,x_l,\,r}\widetilde{\Jcal_1}=\inf_{\substack{W,D_j ,D_l\\ \,B ,\, C}}\widetilde{\Jcal_1}\geq\min\left\lbrace \widetilde{A}^l\left(x^*_l\left(r_j\right),r_j\right),\,-\frac{r_j}{a_j}\right\rbrace.
\end{equation}
Combining the minima from~\eqref{eq:minimum1} and~\eqref{eq:minimum2}, we obtain the relationship~\eqref{eq:finalMin}, where $\widetilde{A}^l\left(x^*_l\left(r_j\right),r_j\right) = -A\left(l,j\right)$ and $\widetilde{A}_j\left(x^*_j\left(r_l\right),r_l\right) = A\left(j,l\right)$.
\end{proof}

\section{Conclusion and future work}\label{sec: conclusion}
This paper addresses the problem of approximating multiple linear
systems using a single non-linear neural controller. Key contributions
of this work include a characterization of the approximation
performance of the neural controller, in terms of analytical lower and
upper bounds, and the design of gradient-based algorithms to train the
controller parameters. Directions of future work include the design of
switching mechanisms to engage different controller modalities, as
well a study of the approximation properties of the neural controller
away from the pre-specified systems.



\bibliographystyle{IEEEtran}

\bibliography{./BIB/New,./BIB/Main,./BIB/FP}

\end{document}